\definecolor{bananamania}{rgb}{0.98, 0.91, 0.71}
\DeclarePairedDelimiter{\ceil}{\lceil}{\rceil}  %%%--> for ceil symbol
\newtheorem{definition}{Definition}
\newtheorem{example}{Example}
\theoremstyle{assumption}
\newtheorem{proposition}{Proposition}
\newtheorem{corollary}{Corollary}
\begin{document}

% the note above the first page:
\AddToShipoutPictureBG*{%
  \AtPageUpperLeft{%
    \setlength\unitlength{1in}%
    \hspace*{\dimexpr0.5\paperwidth\relax}
    \makebox(0,-0.75)[c]{\normalsize Published in Signal Processing journal, Elsevier, 2020.}
    }}

\title{Distributed Voting in Beep Model}

\author{Benyamin Ghojogh and
		Saber Salehkaleybar

\thanks{Benyamin Ghojogh's e-mail:  \href{mailto:ghojogh_benyamin@ee.sharif.edu}{ghojogh\_benyamin@ee.sharif.edu}}         
\thanks{Benyamin Ghojogh's main e-mail: bghojogh@uwaterloo.ca}   
\thanks{Saber Salehkaleybar's e-mail: saleh@sharif.edu}
\thanks{Both authors are with Department of Electrical Engineering, Sharif University of Technology, Tehran, Iran.}}

%\markboth{To be Submitted to ...}%
%{Shell \MakeLowercase{\textit{et al.}}: Bare Demo of IEEEtran.cls for IEEE Journals}

\maketitle

\begin{abstract}
We consider the problem of distributed multi-choice voting in a setting that each node can communicate with its neighbors merely by sending beep signals. Given its simplicity, the beep communication model is of practical importance in different applications such as system biology and wireless sensor networks. Yet, the distributed majority voting has not been resolved in this setting. In this paper, we propose two algorithms, named Distributed Voting with Beeps, to resolve this problem. In the first proposed algorithm, the adjacent nodes having the same value form a set called spot. Afterwards, the spots with majority value try to corrode the spots with non-majority values. The second proposed algorithm is based on pairwise interactions between nodes. The proposed algorithms have a termination detection procedure to check whether voting is achieved. We establish theoretical guarantees for the convergence of these algorithms. In particular, we show that the success probability of the first algorithm tends to one as the percentage of the initial votes in majority increases. For the second algorithm, we show that it returns the correct output with high probability. Our experiments show that the algorithms are fairly invariant to the network topology, initial distribution of values, and network size.
\end{abstract}

\begin{IEEEkeywords}
Distributed voting, distributed signal processing, majority voting, consensus, beep model, communication protocols. 
\end{IEEEkeywords}

\IEEEpeerreviewmaketitle
%%%%%%%%%%%%%%%%%%%%%%%%%%%%%%%%%%%%%%%%%

\section*{Highlights}

\begin{itemize}
\item Proposing two distributed algorithms for multi-choice majority voting in the beep model
\item Showing the correctness of the algorithms and analyzing their time,
message, and space complexities
\item Comparing with the baseline methods analytically
\item Evaluating the proposed methods experimentally on different network topologies and initial distributions of votes
\end{itemize}

\section{Introduction}\label{section_introduction}

\IEEEPARstart{C}{ommunication}  protocols fall into two main categories, i.e., shared memory and message-passing  \cite{lynch1996distributed,navlakha2015distributed}. In the shared memory model, agents communicate with each other by writing in or reading from a global shared memory whereas message-passing communication is in the form of sending and receiving messages between agents.  
The message-based communication protocols can be categorized into different types such as population protocol, stone-age protocol, and beep model \cite{navlakha2015distributed}. 

In population protocols, pairs of agents, modeled as finite state machines, communicate with each other to update their states. The order of asynchronous pairwise interactions between agents is determined by a fair scheduler \cite{aspnes2009introduction}. 
In the stone-age protocol, however, the interactions are not necessarily pairwise. This protocol is based on a network of finite state machines (nFSM) \cite{emek2013stone} which allows the agents to communicate asynchronously through rich-sized messages while restricting the capacity of their memory to a constant value \cite{navlakha2015distributed}. In contrast, in the beep model \cite{cornejo2010deploying},  each agent has an internal memory of logarithmic size while the messages are binary \cite{navlakha2015distributed}. This model is synchronous and each node can merely beep or listen at every time slot \cite{cornejo2010deploying,navlakha2015distributed}. 
Examples of other distributed computation models with limited communication/memory capabilities include  LOCAL \cite{linial1987distributive} and CONGEST \cite{peleg2000distributed}, in which agents interact at synchronous time slots.
In the LOCAL model, compared to the beep model, there is no restriction on the size of messages and each node has a unique ID. The CONGEST model is similar to the LOCAL model except that the size of message is restricted to $\Uptheta(\log(n))$ where $n$ is the number of agents.

%\todo[inline]{I changed this sentence: In beep model, which is synchronous, nodes do not have knowledge about the topology and the size of the network, and they can merely beep and/or listen in every time slot. \textbf{Because}: 1- reviewer will say that why your algorithm assumes to know topology? 2- reviewer will say that why your algorithm assumes to know size if network when we consider N as upper bound on the diameter of network? 3- I changed and/or to or because in beep model we cannot hear while beeping.}

%\todo[inline]{In next paragraph, why do we say ``systems biology'' rather than ``biology systems''?}

Distributed systems with severe communication constraints have been studied in system biology \cite{navlakha2015distributed,feinerman2013theoretical} and wireless sensor networks \cite{cornejo2010deploying}. 
Some of the examples in system biology are in Chemical Reaction Networks (CRN) \cite{chen2014deterministic,chen2013programmable}, protein-protein interactions \cite{uetz2000comprehensive}, and natural animal interactions \cite{beauquier2013tight,guerraoui2015byzantine}. 
The population protocol, modeling asynchronous pairwise communication between mobile agents, can be commonly seen in the nature \cite{navlakha2015distributed}; a real world example involving natural life is given in \cite{beauquier2013tight}. 
This is in part because in biological systems, the memory of a biological cell is highly limited \cite{cheong2011information}. Hence, the stone-age protocol can be a good model for biological communications \cite{navlakha2015distributed}.
Moreover, biological interactions carry small effective information \cite{jongeneel2005atlas}, which motivates the  use of  beep model for biological communication \cite{navlakha2015distributed}. 
Previous works have studied the application of this model for distributed voting among cells and DNA molecules \cite{chen2013programmable}.

%The proposed voting algorithm has a simple structure and can be employed in networks having severe constraints on the size of messages.

%The original ZebraNet \cite{juang2002energy} is modified in \cite{beauquier2013tight} to a population protocol while in the ZebraNet several sensors are attached to zebra animals and sensors send their data to the zebra which has visited the base station more frequently in hope to reach the base station. 

%We may
%In this paper, we propose a distributed voting algorithm for beep protocol. Not only is our voting algorithm useful for voting among chemical molecules or proteins as mentioned before, but also it may be applied in distributed servers and networks. 

\subsection{Related Work}
%In this paper, we study the problem of distributed voting in beep model.
Although beep model is almost a newly developed communication protocol, it has received much attention in recent years. Different algorithms have been proposed for various distributed problems in this model. Examples of these problems include interval coloring and graph coloring \cite{cornejo2010deploying,halldorsson2014distributed,casteigts2019design,metivier2015distributed}, leader election \cite{dufoulon2018brief,dufoulon2018beeping,dufoulon2018beepingDISC,forster2014deterministic,ghaffari2013near,czumaj2016brief,augustine2013robust,gilbert2015computational}, maximal independent set \cite{halldorsson2018computing,afek2013beeping,holzer2017beeping,halldorsson2015distributed,scott2013feedback,jeavons2016feedback}, minimum connected dominating set \cite{yu2015minimum}, network size approximation and counting \cite{brandes2016approximating,brandes2017fast,casteigts2019counting,kardas2014distributed}, deterministic rendezvous problem \cite{elouasbi2017deterministic}, naming problem \cite{chlebus2017naming}, membership problem \cite{huang2013conflict,huang2012brief}, broadcasting \cite{hounkanli2016asynchronous,czumaj2016communicating,czumaj2019communicating,hounkanli2015deterministic}, and consensus \cite{hounkanli2016global}.

In the problem of consensus, each agent chooses an initial value and and the goal is that all agents end up having a common value from a set of initial values. In \cite{hounkanli2016global}, a distributed consensus algorithm was proposed for the beep model in a fully-connected network. 
Authors in \cite{hounkanli2016global} first proposed an algorithm for obtaining global synchronization out of local clocks of agents. 
For achieving consensus, they transformed the binary representation of values to a bit stream. After achieving synchronization, every agent beeps and listens in several defined time slots according to its bit stream.
If an agent does not hear any beep, it infers that all the agents have the same value so it outputs its own value; otherwise, it outputs the default value. 
The agents can reach consensus in time $O(\log l_{\text{min}})$, where $l_{\text{min}}$ is the smallest initial value among agents \cite{hounkanli2016global}. 

Distributed voting is a special type of the consensus problem in which all the agents should decide on a value selected by majority of agents as their initial values. 
Distributed voting has various applications in distributed signal processing. Some possible applications are probabilistic belief propagation in image processing applications (e.g., where the image can be considered as a factor graph) \cite{sudderth2008signal,komodakis2007image}, signal processing over the peer-to-peer networks \cite{wang2007distributed,dong2019pair}, and consensus in distributed pattern analysis and statistical learning \cite{jordan2019communication,blot2019distributed}. These applications are mainly introduced in wireless sensor networks, where sensors collect data locally and distributed voting is required as a subroutine.
As another example, wireless networks with severely limited communication capabilities can be modelled by the beep model. For instance, Cornejo and Kuhn \cite{cornejo2010deploying} utilized  a ``carrier sensing" mechanism in wireless networks in order to send beep signals in a sensor network.

In this paper, we consider the problem of distributed voting algorithms in the beep model. 
%To the best of our knowledge, this is the first distributed algorithm for the problem of majority voting in this model. 
There are several distributed voting algorithms in the literature for other types of communication protocols. 
Some of these algorithms have been proposed for the binary case in population protocols. Binary voting using randomized gossip algorithm \cite{boyd2006randomized}, average consensus \cite{benezit2009interval}, two-sample voting \cite{cooper2014power}, and via exponential distribution \cite{salehkaleybar2016distributed} are the examples of binary voting algorithms. 
There are also some methods for multi-choice voting, such as voting using pairwise asynchronous graph automata \cite{benezit2011distributed} and union/intersection operations \cite{salehkaleybar2015distributed}. 
Moreover, there are voting algorithms which perform voting through ranking \cite{salehkaleybar2015distributed,jung2012distributed} and plurality consensus \cite{becchetti2017simple,becchetti2015plurality,babaee2013distributed}.

%Notice that our voting algorithm can be used for both binary-choice and multi-choice cases. 

\subsection{Our Contribution}

In this paper, we propose Distributed Voting with Beeps 1 (DVB1) and Distributed Voting with Beeps 2 (DVB2) algorithms for \textbf{multi-choice voting} in the \textbf{beep model}  which work under any \textbf{arbitrary network topology}. To the best of our knowledge, these are the first distributed algorithms for the problem of majority voting in this model. The two DVB algorithms have \textbf{simple} structures and are applicable in both wireless networks and biological networks with limited message size. The proposed algorithms have the following characteristics:
\begin{itemize}
\item 
Let $n_1$ and $n_2$ be the number of supporters of the first and second most  popular initial values. In a fully connected network with $K$ possible choices, DVB1 algorithm returns the correct result with probability at least $(1-\exp(-\sqrt{n_1/n_2})\times\exp(-(K-1)n_2/(\sqrt{n_1 n_2}-1))$.     
Thus, the probability of success in DVB1 goes to one as the ratio $n_1/n_2$ increases. The DVB2 also returns the correct vote with high probability.
\item Our empirical results show that the DVB1 algorithm is  fairly invariant to the network topology, initial distribution of values, and the population of nodes. DVB2 is also invariant to initial values and population according to \cite{salehkaleybar2015distributed}.
%\item The exact probability of success of DVB for fully connected topology is analyzed using Markov model.
\item Both algorithms incorporate termination detection for voting as well as the distributed consensus. 
\item The time, message, and space complexities of DVB1 algorithm are $\Uptheta(KD\log(N))$, $O(ND\log(N) + NK)$, and $O(\log(KD) + \log^{(2)}(N))$, respectively, where $K$ is the number of possible choices, $D$ is the diameter of the network, $N$ is the number of nodes, and $\log^{(2)} (.) = \log (\log (.))$. 
The lower bound on the message and space complexities of DVB1 are $\Omega(ND)$ and $\Omega(\log(KD))$, respectively. 
The time, message, and space complexities of DVB2 algorithm are $O\big(D\Delta^2 (\log(\Delta))^2 \log(N) + KD\Delta \log(\Delta) \log(N) \big)$, $O(KDN\log(N))$, and $O(K\log(K) + \log(D\Delta))$, respectively, where $\Delta$ is the maximum degree in the network. The lower bound on the time, message, and space complexities of DVB2 are $\Omega(\log(N))$, $\Omega(ND + N \log(N))$, and $\Omega(\log(KD) + \log^{(2)} (\Delta))$, respectively.

\end{itemize}

The remainder of this paper is organized as follows. Section \ref{section_systemModel} describes the beep model and defines the distributed voting problem. The description of DVB1 and DVB2 algorithms are given in Sections \ref{section_DVB1} and \ref{section_DVB2}, respectively. 
The correctness of DVB algorithms, as well as the probability of success of DVB1 algorithm, are provided in Section \ref{section_correctness}.
The analyses of time, message, and space complexities of the two algorithms are given in Section \ref{section_complexity}. Comparison to alternative approaches is reported in Section \ref{section_comparison}. The proposed algorithms are evaluated experimentally in Section \ref{section_simulation}. Finally, Section \ref{section_conclusion} concludes the paper and discusses possible future directions. 

\section{System Model}\label{section_systemModel}

In this section, we describe the system model and introduce
the concepts that are used in the description of the proposed
algorithms.

\subsection{Network Model}

Consider a network of $N$ nodes where each node has an initial value. The network topology can be determined by a graph $G = (V,E)$ where $V$ is the set of vertices (nodes) as $V = \{1,2,...,N\}$ and $E$ denotes the set of edges existing in the network. The edge set can be determined as $E \subseteq V \times V$, such that $(i,j) \in E$ if and only if nodes $i$ and $j$ can communicate directly. We say that nodes $i$ and $j$ are neighbors if $(i,j)
\in E$. The edges of network are all bidirectional. 
The diameter of network is denoted by $D$. We assume that upper bounds on $N$ and $D$ (or just on $N$ which is also an upper bound on $D$) are already known by nodes.

\subsection{Beep Model}
We assume that nodes are communicating with each other based on the beep model \cite{cornejo2010deploying}. In this model, each node can send a beep signal and all its neighbor nodes will receive it. Note that every node can either beep or listen; therefore, listening while beeping is not possible in this model. Every node cannot distinguish the number of beeps it receives if these beep signals are sent at the same time.
In this model, the internal memory of nodes is restricted to a logarithmic amount of bits.
We assume that the global synchronization has been achieved, as also assumed in the original beep model \cite{cornejo2010deploying}, and the nodes send beep signals at the specified time slots; hence,  nodes do not need to keep the global time index in memory.
In this work, the basic unit of time is referred to as time slot.

\subsection{Problem Definition}

We denote the value of node $i$ at time slot $t$ by $v_i(t)$. Moreover, we consider a set of $K$ levels for the values of nodes, i.e., $L= \{l_1,l_2, ..., l_K\}$.
We assume that all nodes know the number of levels, i.e., $K$.

\begin{definition}
We denote the number of nodes having value $l_j$ at time slot $t$ by $\#l_j(t)$. Thus, we have: $\#l_j(t) \triangleq \big|\{i|i \in V, v_i(t) = l_j\}\big|$. 
\end{definition}

Our goal is to design distributed algorithms based on the beep model such that all nodes find a level $l_k$ which is chosen by the majority of nodes as their initial values. In other words, we want to determine the level $l_k$ where $\#l_k(0) \geq \#l_i(0)$, $\forall i \neq k$.

\begin{definition}
In this work, the time, message, and space complexities are defined as the number of time slots, the number of total beeps transimitted by the nodes, and the number of bits required by every node to run the algorithm until reaching consensus on the majority vote, respectively.
\end{definition}

\section{Distributed Voting with Beeps (DVB) 1 Algorithm}\label{section_DVB1}
The DVB1 algorithm is based on two concepts which we call ``spot'' and ``corrosion of spots''. First, we explain these concepts and then we describe DVB1 algorithm.

\begin{figure}[!t]
\centering
\includegraphics[width=\columnwidth]{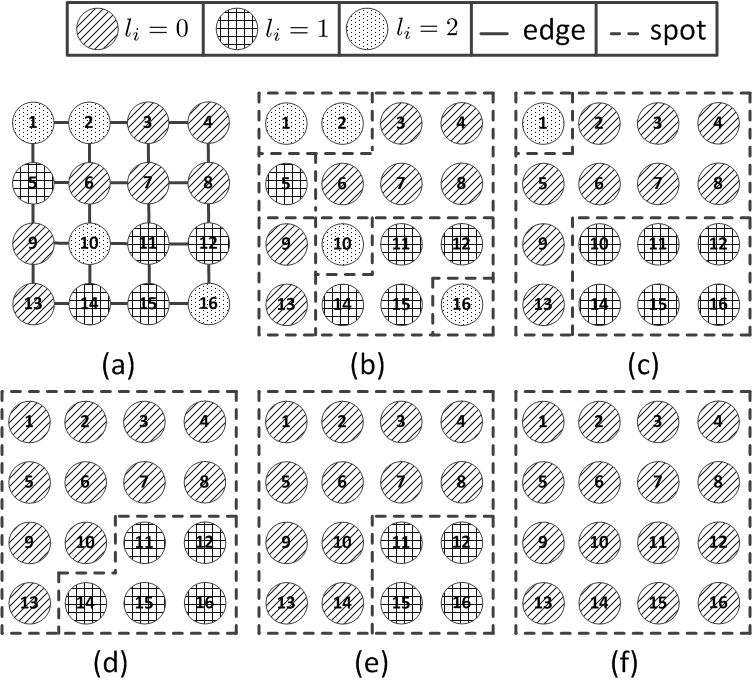}
\caption{An example of executing the DVB1 algorithm in 2D mesh grid topology. (a) initial values of nodes, (b) initial existing spots, (c) possible values after  the first corrosion, (d) possible values after the second corrosion, (e) possible values after the third corrosion, (f) eventual values after termination where sufficient number of phases have passed.}
\label{figure_example}
\end{figure}

\subsection{Spots}

\begin{definition}
A path between nodes $i$ and $j$ is denoted by $i \rightsquigarrow j: i \rightarrow i_1 \rightarrow i_2 \rightarrow \dots  \rightarrow i_r \rightarrow j$, $\{i_1, \dots, i_r\} \subseteq V$ where any two nodes $i_k$ and $i_{k+1}$ on the path, $1\leq k\leq r-1$, are neighbors. 
\end{definition}

\begin{definition}\label{definition_spot}
We say a set of nodes $V'\subseteq V$ form a spot if:
\begin{align*}
&\forall i,j \in V', \exists \text{ } i \rightsquigarrow j: i \rightarrow i_1 \rightarrow i_2 \rightarrow \dots  \rightarrow i_r \rightarrow j, \\&\mbox{such that }
 v_i(t) = v_{i_1}(t) = v_{i_2}(t) = \dots = v_{i_k}(t) = v_j(t).
\end{align*}
\end{definition}
%Thus, a spot is a connected compo

Note that several spots may exist in a network. Furthermore, there might be a node such that its value is different from all its neighbors. In this case, that node is forming a spot only by itself.
It is noteworthy to mention that some similar concepts to ``spot'' can be found in the literature;  for example, ``cluster'' in \cite{ghaffari2013near} is defined as a set of nodes including one candidate node for leadership.

\begin{example}
As an example, suppose we have three different values $l_0$, $l_1$, and $l_2$ in the network. 
Figure \ref{figure_example}-a depicts an example of initial values in a 2D mesh grid network. As shown in Fig. \ref{figure_example}-b, the initial values form spots in the network. This network has spots of nodes $\{1,2\}$, $\{3,4,6,7,8\}$, $\{5\}$, $\{9,13\}$, $\{10\}$, $\{11,12,14,15\}$, $\{16\}$. As it can be seen, some of the spots, e.g. $\{5\}$, are single-node spots. 
\end{example}

\subsection{Corrosion Of Spots}

The main idea of the DVB1 algorithm is to shrink small spots, i.e., spots with a few number of nodes. We call this procedure ``corrosion". After multiple corrosions, large spots with the same value will merge and create one large spot containing all the nodes in the network. To draw an analogy, suppose that there are two neighboring spots and one of them is smaller than the other one. Imagine that the small spot is like a material which is sensitive to a special type of acid. The larger spot can be considered as the acid which corrodes the small spot until it disappears and the acid remains solely. 

We propose DVB1 algorithm to corrode small spots. The DVB1 algorithm consists of two main subroutines ``Corrosion()'' and ``TerminationDetection()'' (see Algorithm \ref{algorithm_overall} which is performed by every node $i$). The proposed algorithm executes iteratively until it terminates. We call each iteration of the algorithm a phase. In DVB1 algorithm, each phase consists of $T$ rounds for the corrosion subroutine. Every $D$ phases, the termination detection subroutine is executed having at most $(K-1)$ periods (see Fig. \ref{figure_diagram}). 
As shown in this figure, a round and a period consist of $K$ and $(D + 1)$ time slots, respectively.
The timing details of algorithm will be explained in the following.
Note that, in order to improve time complexity of algorithm, the termination detection is run every $D$ phases, i.e., at the end of every $O(D)$ phases.

\begin{algorithm}[!h]
\caption{DVB1 Algorithm}\label{algorithm_overall}
\begin{algorithmic}[1]
%\Procedure{Distributed voting with beeps}{}
\State counter $\gets$ $0$
\Repeat  \Comment every iteration is a phase
    \State counter $\gets$ counter $+ 1$
	\State Corrosion()
	\If{counter $= D$}
	    \State counter $\gets$ $0$  \Comment reset the counter
	    \State TerminationDetection() \label{algorithmOverall_termination}
	\EndIf
\Until{not terminated}
%\EndProcedure
\end{algorithmic}
\end{algorithm}

%\subsubsection{Description of Corrosion()}

\begin{definition}
A round $R_j$ is a sequence of $K$ time slots $(t_{j,1}, t_{j,2}, \dots, t_{j,K})$ in which time slot $t_{j,k}$ is allocated to the level $l_k$. Note that, in this work, the term ``round'' should not be confused with the basic unit of time which is common in the literature (here, the basic unit of time is named ``time slot'').
\end{definition}

In round $R_j$, at time slot $t_{j,k}$, each node having the value of $l_k$ beeps if it is alive in that round. Then, it decides to either remain alive or die for the next round with probability $p$ ($=\frac{1}{2}$) (see lines \ref{algorithmCorrosion_start_beep}-\ref{algorithmCorrosion_end_beep} in Algorithm \ref{algorithm_corrosion}). 
In line \ref{algorithmCorrosion_probability_beep} in Algorithm \ref{algorithm_corrosion}, $U(0,1)$ denotes the uniform random number in the range $[0,1]$. 
Please note that a node cannot beep until the end of corrosion in a phase if it dies after beeping in a round of that phase. In Algorithm \ref{algorithm_corrosion}, we check this condition using the variable ``IsAllowedToBeep''.

\begin{figure}[!t]
\centering
\includegraphics[width=\columnwidth]{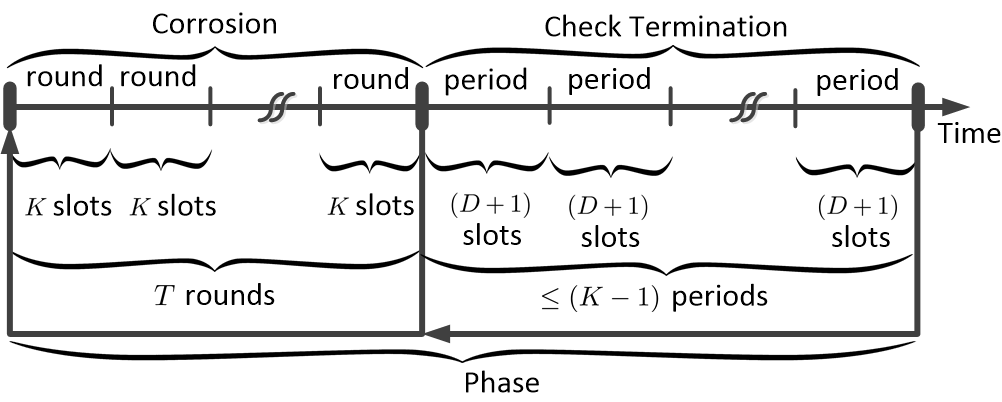}
\caption{A schematic of the phase in the DVB1 algorithm. The termination detection is run every $D$ phases, i.e., at the end of every $O(D)$ phases.}
\label{figure_diagram}
\end{figure}

\begin{algorithm}[!h]
\caption{Corrosion}\label{algorithm_corrosion}
\begin{algorithmic}[1]
%\Procedure{Corrosion}{}
	\State IsAllowedToBeep $\gets$ 1
	\For{$j$ \textbf{from} 1 \textbf{to} $T$}  \Comment{$T = \lceil c_1 \log_2 (N) \rceil$}
		\State hearBeep($l_k$) $\gets$ 0, $\forall k=1,\dots, K$ 
		\For{$k$ \textbf{from} $1$ \textbf{to} $K$} \label{algorithmCorrosion_timeSlot}
			\If{$v_i(t_{j,k})$ = $l_k$ \textbf{and} IsAllowedToBeep = 1} \label{algorithmCorrosion_start_beep}
			\State beep 
				\If{$U(0,1)$ $<$ $1 - p$}  \label{algorithmCorrosion_probability_beep}
					\State IsAllowedToBeep $\gets$ 0
				\EndIf
			\EndIf	\label{algorithmCorrosion_end_beep}
			\If{hear at least a beep}
				\State hearBeep($l_k$) $\gets$ 1
			\EndIf
		\EndFor
		\If{$\exists\, m$: hearBeep($l_m$) = 1 \textbf{and} hearBeep($l_n$) = 0 $,\forall n \neq m$}\label{algorithmCorrosion_start_changeValue}  %\Comment  
			\State $v_i(t_{j,K}) \gets l_m$
		\EndIf \label{algorithmCorrosion_end_changeValue}
	\EndFor	
%\EndProcedure
\end{algorithmic}
\end{algorithm}

%\todo[inline]{In the definition 5 and also above explanation, I changed ''Note that dead nodes can hear but cannot beep until the end of that phase.'' to ''Note that dead nodes can hear but cannot beep until the end of \textbf{corrosion of} that phase.'' because there is no consideration of alive and dead nodes in termination detection which exists in the pahse.}

\begin{definition}
In every phase, we call nodes allowed and not allowed to beep ``alive" nodes and ``dead" nodes, respectively. At the first round of every phase, all nodes are alive. Note that dead nodes can hear but cannot beep until the end of corrosion of that phase.
\end{definition}

If a node is alive, with a chance of $p$, it will remain alive and may beep again in the time slot allocated to its value level in the next round (see line \ref{algorithmCorrosion_probability_beep} in Algorithm \ref{algorithm_corrosion}). Furthermore, if the majority of neighbor nodes of a node have the same value, e.g. $l_m$, we can expect with a good chance that after passing several rounds in a phase, that node will hear beep only at time slot $t_{j,m}$ and will receive no beeps at other time slots of the round. 
Thus, if a node hears a beep at time slot $t_{j,m}$ and does not hear any other beeps at the other time slots of that round, it concludes that the majority of its neighbors may have the same value $l_m$ corresponded to the time slot $t_{j,m}$; thus, it modifies its value to $l_m$ (see lines \ref{algorithmCorrosion_start_changeValue}-\ref{algorithmCorrosion_end_changeValue} in Algorithm \ref{algorithm_corrosion}).

\begin{proposition}
After $\lceil \log_2 (N/\varepsilon) \rceil$ rounds in each phase, all nodes are dead with probability at least $1-\varepsilon$.
\end{proposition}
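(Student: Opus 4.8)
The plan is to track the survival probability of a single node across rounds of a phase and then take a union bound over all $N$ nodes. First I would record the following structural fact about Algorithm~\ref{algorithm_corrosion}: within a phase every node is initialized alive (\texttt{IsAllowedToBeep}~$\gets 1$), and at the start of each round $j$ the node holds \emph{some} value level $l_k$, so it is scheduled to beep in time slot $t_{j,k}$ of that round. Hence, as long as a node is alive at the beginning of round $j$, it beeps exactly once during round $j$, and then by line~\ref{algorithmCorrosion_probability_beep} of Algorithm~\ref{algorithm_corrosion} it independently becomes dead with probability $1-p = \tfrac12$ (and otherwise stays alive). The value updates in lines~\ref{algorithmCorrosion_start_changeValue}--\ref{algorithmCorrosion_end_changeValue} only change \emph{which} slot the node beeps in, not \emph{whether} it beeps, so they do not affect this argument.

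Next, fix a node $i$. The event that $i$ is still alive after $r$ rounds is the intersection of the $r$ independent events ``$i$ survives round $1$'', ``$i$ survives round $2$'', \dots, ``$i$ survives round $r$'', each of probability $p = \tfrac12$. Therefore $\Pr[i \text{ alive after } r \text{ rounds}] = 2^{-r}$. Applying a union bound over the $N$ nodes gives $\Pr[\text{some node alive after } r \text{ rounds}] \le N\,2^{-r}$.

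Finally I would choose $r = \lceil \log_2(N/\varepsilon) \rceil$, so that $N\,2^{-r} \le N\cdot \varepsilon/N = \varepsilon$, which yields that after $\lceil \log_2(N/\varepsilon) \rceil$ rounds all nodes are dead with probability at least $1-\varepsilon$, as claimed.

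The computation itself is routine; the only step needing care is the structural observation in the first paragraph, namely justifying that an alive node beeps exactly once per round so that the per-round death probability is \emph{exactly} $\tfrac12$ and the survival events across rounds are genuinely independent, and that the intra-phase value changes do not disturb this. I would also remark that the bound is essentially tight up to the ceiling, since a single fixed node remains alive after $r$ rounds with probability exactly $2^{-r}$, so no significantly smaller number of rounds can guarantee the same failure probability by this kind of argument.
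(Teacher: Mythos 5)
Your proof is correct and takes essentially the same approach as the paper: both compute the per-node survival probability $p^r = 2^{-r}$, bound the probability that some node is still alive by $N\,2^{-r}$, and solve $N\,2^{-r} \le \varepsilon$ to get $r = \lceil \log_2(N/\varepsilon)\rceil$. The only cosmetic difference is that you use a union bound where the paper invokes independence of the nodes' deaths and the inequality $(1-p^r)^N \ge 1 - Np^r$; these give the identical bound (and your version does not even require independence across nodes), while your explicit check that an alive node beeps exactly once per round is a careful justification of the per-round death probability that the paper takes for granted.
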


\begin{proof}
For a node, the probability of being dead after $r$ rounds is $1-p^r$.
Since each node is dying independently, all nodes are dead after $r$ rounds with probability $(1-p^r)^N$.
Moreover, we know $(1-p^r)^N \geq (1-N p^r)$.
Hence, the probability of all nodes being dead after $r$ rounds is at least $(1-N p^r)$ which we want to be greater than $1-\varepsilon$. 
In other words, we need $N p^r \leq \varepsilon$ where $\varepsilon$ is a small constant. Since $p = \frac{1}{2}$, we have $(1/2)^{r} \leq \frac{\varepsilon}{N}$. Thus, all nodes will be dead after $r \geq \log_2 (\frac{N}{\varepsilon})$ rounds with probability at least $1-\varepsilon$. 
\end{proof}
According to the above proposition, we can set the number of rounds in each phase, i.e. $T$, to $\lceil c_1 \log_2 (N) \rceil$ where $c_1$ is a large enough constant, e.g., 20. 

\begin{example}
Consider a network with $N=16$ nodes (see Fig.  \ref{figure_example}). There are three levels of values ($K=3$); thus, every round includes three time slots. 
Figures \ref{figure_example}-c, -d, and -e show possible results of the first, second, and third phases. Several phases are passed to reach consensus in the entire network. In Fig. \ref{figure_example}, the number of nodes having values $l_0$, $l_1$, and $l_2$ are 7, 5, and 4, respectively. Hence, it is expected that the voting algorithm finally converges to value $l_0$. As can be seen in Fig. \ref{figure_example}-f, all nodes reach consensus on value $l_0$ and only one large unique spot has been left.
\end{example}

It is worth mentioning that four cases might happen for a node during a phase: (I) The node remains within its spot; therefore, it will not change its value because its value is equal to the values of all its neighbors (e.g., node 4 in figures \ref{figure_example}-b and \ref{figure_example}-c). (II) The node might be on the boundary of its spot and its spot does not have sufficient merit to be expanded at that point and thus the node changes its value to another value level (e.g., node 2 in figures \ref{figure_example}-b and \ref{figure_example}-c). (III) The node might be on the boundary of its spot and the majority of its neighbors have similar value; therefore, with a good chance, that node will merely hear beep at the time slot of its own value in one of the rounds close to the end of corrosion in that phase. Thus, the node keeps its own value in that phase (e.g., node 7 in figures \ref{figure_example}-b and \ref{figure_example}-c). (IV) The node is on the boundary of its spot but it hears at least two beeps in different time slots of the last round of the phase. In other words, at least its two different neighbors remain alive in the whole phase. In this case, the node does not change its value in that phase (e.g., node 9 in figures \ref{figure_example}-b and \ref{figure_example}-c). 

%\subsubsection{Description of TerminationDetection()}

\subsection{Termination Detection}

In every $D$ phases, the termination detection procedure is executed at the end of the phase to check whether voting is reached (line \ref{algorithmOverall_termination} in Algorithm \ref{algorithm_overall}).
The procedure ``Termination Detection()'' is described in Algorithm  \ref{algorithm_termination}. 
This procedure contains at most $(K-1)$ periods each of which contains $(D + 1)$ time slots.

\begin{definition}
A period $P_k$, allocated to the value level $l_k$, is a sequence of $(D+1)$ time slots, $(t_{k,1}, t_{k,2}, \dots, t_{k,D+1})$.
\end{definition}

\begin{algorithm}[!h]
\caption{Termination Detection}\label{algorithm_termination}
\begin{algorithmic}[1]          
%\Procedure{TerminationDetection}{}
\State Terminated $\gets$ 1
\For{$k$ \textbf{from} $1$ \textbf{to} $K-1$} \label{algorithmTermination_period} 
	\If{$v_i (t_{k,1})$ = $l_k$}  \label{algorithmTermination_start_beep1}
		\State beep     \label{algorithmTermination_end_beep1}
	\ElsIf{hears beep}  \label{algorithmTermination_start_findOut}
	    \State Terminated $\gets$ 0      \Comment Has not terminated   \label{algorithmTermination_end_findOut}
	\EndIf   
	\For{$d$ \textbf{from} $1$ \textbf{to} $D$}  \label{algorithmTermination_diameter}
		\If{Terminated = 0 \textbf{or} hears beep at time $t_{k,d}$}  \label{algorithmTermination_start_terminateCancel}
			\State Terminated $\gets$ 0  \label{algorithmTermination_end_terminateCancel}
			\State beep at time $t_{k,d}$    \Comment Tell others voting has not reached  \label{algorithmTermination_beep_inform}
		\EndIf   
	\EndFor
	\If{Terminated = 0}  
	    \State Break \textbf{procedure} \label{algorithmTermination_break}
	\EndIf
\EndFor
%\EndProcedure
\end{algorithmic}
\end{algorithm}

In the period $P_k$ (line \ref{algorithmTermination_period} in Algorithm \ref{algorithm_termination}), all nodes having value $l_k$ beep at the first time slot of the period (lines \ref{algorithmTermination_start_beep1} and \ref{algorithmTermination_end_beep1} in Algorithm \ref{algorithm_termination}). Meanwhile, if a node hears beep while its value is different from the value of that period, it finds out that there are nodes with different value(s) and consensus has not been reached yet. So, it sets its variable ``Terminated'' to zero (lines \ref{algorithmTermination_start_findOut} and \ref{algorithmTermination_end_findOut} in Algorithm \ref{algorithm_termination}). Every node, which finds out the consensus is not achieved, will beep for the rest of that period (line \ref{algorithmTermination_beep_inform} in Algorithm \ref{algorithm_termination}). Please note that $D$ time slots is sufficient to inform the most distant nodes (line \ref{algorithmTermination_diameter} in Algorithm \ref{algorithm_termination}).

At the end of the period $P_k$, every node checks if the variable ``Terminated'' has been set to zero or not. If it is zero, the algorithm has not terminated yet and needs at least another phase of corrosion. Thus, the node will break the TerminationDetection() (line \ref{algorithmTermination_break} in Algorithm \ref{algorithm_termination}) because it knows that there is no more need for checking termination. Otherwise, it starts a new period with another associated value level. 

Notice that if all  nodes have the same majority value, no one hears a beep in any periods of the procedure. Hence, after $(K-1)$ periods or equally after $(K-1) \times (D + 1)$ time slots, they all decide to terminate. 
Please note that we need $(K-1)$ periods instead of $K$ periods since the $K$-th period is redundant in checking termination.

\section{Distributed Voting with Beeps (DVB) 2 Algorithm}\label{section_DVB2}

The DVB2 algorithm is based on the Distributed Multi-choice Voting/Ranking (DMVR) algorithm \cite{salehkaleybar2015distributed} and results in correct output for any network topology and any number of nodes with high probability (w.h.p.).
%It has been shown that DMVR returns the correct output in the model of population protocols with probability one \cite{salehkaleybar2015distributed}.
In fact, if the nodes already have unique IDs, the DVB2 algorithm eventually returns correct result with probability one. Although, its time complexity might be greater than the one of DVB1 algorithm if the maximum degree of network is large. 
The advantage of DVB2 over DVB1 is that it guarantees returning the correct result while its time complexity is higher than DVB1. Note that DMVR \cite{salehkaleybar2015distributed} assumes that one of the levels is in strict majority; here, the same assumption is considered for DVB2.
It is noteworthy that with a slight change in DVB2, it can be used for distributed ranking of values in beep model.

A schematic of a phase in the DVB2 algorithm is shown in Fig. \ref{figure_diagram2}.
Algorithm \ref{algorithm_overall2} describes the DVB2 algorithm which is performed by every node $i$. 
First, each node chooses a random ID from one to $Y$ (line \ref{algorithmDVB2_chooseID} in Algorithm \ref{algorithm_overall2}). 
\begin{figure*}[!h]
\centering
\includegraphics[width=\textwidth]{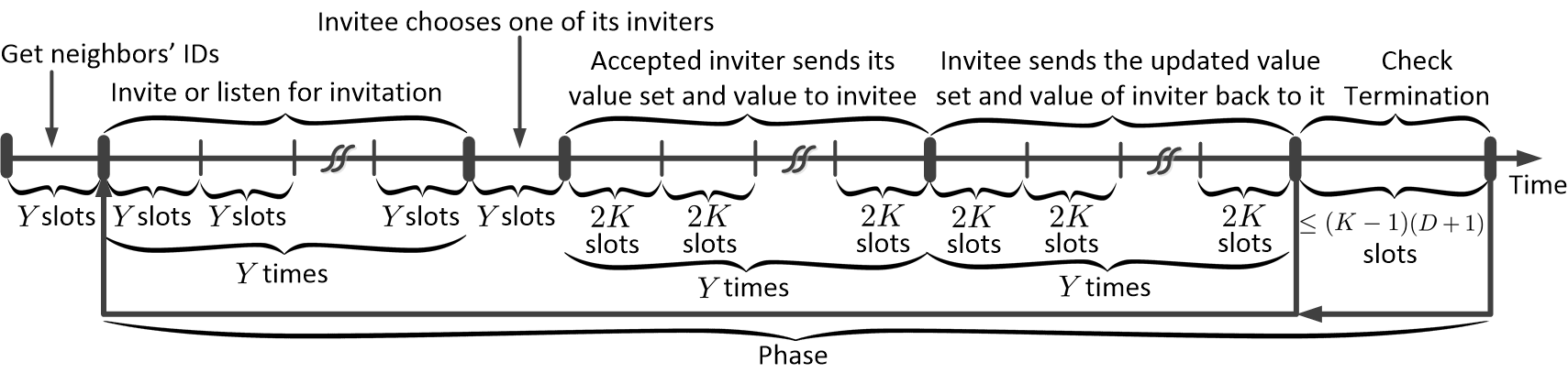}
\caption{A schematic of the phase in DVB2 algorithm. The time-consuming parts of DVB2 are only shown in this diagram. The termination detection is run every $D$ phases, i.e., at the end of every $O(D)$ phases.}
\label{figure_diagram2}
\end{figure*}

According to the URN model \cite{johnson1977urn}, if $Y := \ceil{c_2\, \Delta \log_2 (\Delta)}$ where $\Delta$ is the maximum degree in the network and $c_2$ is a large enough constant (e.g., $20$), then the neighbor nodes will have different IDs w.h.p. It is noteworthy that the probability of having different IDs for neighbors goes to one by increasing $c_2$. This step can be omitted if the nodes already have unique IDs. 

Afterwards, in $Y$ time slots, the nodes find out the IDs of their neighbors (and collecting them in a set $\text{ID}_i^n$) by beeping at the slot associated to their IDs and listening at the other slots (lines \ref{algorithmDVB2_start_neighborID}-\ref{algorithmDVB2_end_neighborID} in Algorithm \ref{algorithm_overall2}). Then, the phases (iterations) start where interactions and termination detection (every $D$ phases) are performed repeatedly until the convergence of voting (lines \ref{algorithmDVB2_start_phase}-\ref{algorithmDVB2_end_phase} in Algorithm \ref{algorithm_overall2}).  
The termination detection in Algorithm \ref{algorithm_overall2} is the same as in the Algorithm \ref{algorithm_termination}. 
Again, in order to improve the time complexity of algorithm, termination detection is executed every $D$ phases.

\begin{algorithm}[!h]
\caption{DVB2 Algorithm}\label{algorithm_overall2}
\begin{algorithmic}[1]
%\Procedure{Distributed voting with beeps}{}
\State /* Choosing a random ID: */ 
\State $\text{ID}_i$ $\sim \text{Uniform}\{1, 2, \dots, Y \}$ \label{algorithmDVB2_chooseID}
\State /* Finding out IDs of neighbors: */ 
\State $\text{ID}_i^n \gets \varnothing$ \label{algorithmDVB2_start_neighborID}
\For{$j$ \textbf{from} 1 \textbf{to} $Y$}  
    \If{$\text{ID}_i = j$}
        \State beep
    \ElsIf{hears beep}
        \State $\text{ID}_i^n \gets \text{ID}_i^n \cup \{j\}$
    \EndIf
\EndFor	\label{algorithmDVB2_end_neighborID}
\State /* DMVR initialization: */ 
\State $\mathcal{V}_i(0) \gets v_i(0)$ \label{algorithmDVB2_valueSetInitialize}
\State /* The phases: */ 
\State counter $\gets$ $0$
\Repeat  \Comment every iteration is a phase \label{algorithmDVB2_start_phase}
    \State counter $\gets$ counter $+ 1$
    \State Interactions($\mathcal{V}_i$)
	\If{counter $= D$}
	    \State counter $\gets$ $0$  \Comment reset the counter
	    \State TerminationDetection() 
	\EndIf
\Until{not terminated}
\label{algorithmDVB2_end_phase}
%\EndProcedure
\end{algorithmic}
\end{algorithm}

The interactions include different steps. First, every node decides to invite or listen for invitation with a probability denoted by $p_\text{inv}$ (lines \ref{algorithmInteractions_start_decideInvite}-\ref{algorithmInteractions_end_decideInvite} in Algorithm \ref{algorithm_interactions1}). 
\begin{definition}
The invitation of an inviter node is defined as a collision because an inviter  cannot be invitee.
\end{definition}
\begin{proposition}\label{proposition_probNotCollision_DVB2}
Suppose that all nodes have the same degree. The probability that an inviter node can interact with one of its neighbor nodes is maximized when $p_\text{inv}$ is equal to $0.5$.
\end{proposition}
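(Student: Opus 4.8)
The plan is to reduce the statement to maximizing an explicit single‑variable function of $p_\text{inv}$: write the probability that a given inviter successfully interacts with a neighbour as a product of independent indicator probabilities, and then optimize over $p_\text{inv}\in[0,1]$.

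First I would fix an inviter node $v$ and one of its $d$ neighbours $u$ (where $d$ is the common degree) and describe the event ``$v$ interacts with $u$'' as a conjunction of events driven by the independent coin flips of the participating nodes: $v$ must come up ``inviter'' (probability $p_\text{inv}$), $u$ must come up ``invitee'' (probability $1-p_\text{inv}$), and the remaining neighbours on each side must make the choices that prevent a collision from masking the handshake --- recalling from the Definition above that an inviter's own beep at a shared invitee already counts as a collision. Since all nodes have the same degree $d$, each ``no‑collision'' condition is a product of $d-1$ independent coins of the same type, so the whole probability collapses to a product of factors equal to $p_\text{inv}$ and to $1-p_\text{inv}$. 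The structural point I want to extract is that this expression is \emph{symmetric} under $p_\text{inv}\mapsto 1-p_\text{inv}$ --- concretely something of the form $c\,\big(p_\text{inv}(1-p_\text{inv})\big)^{m}$ with $c$ and $m$ depending only on $d$.

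Next I would lift ``a fixed neighbour'' to ``one of the neighbours''. Because the $d$ neighbours of $v$ are interchangeable and an inviter ends up interacting with at most one neighbour in a step, the corresponding events are (essentially) mutually exclusive, so the probability that $v$ interacts with \emph{some} neighbour is $d$ times the per‑edge probability --- again a function $f(p_\text{inv})$ symmetric about $p_\text{inv}=1/2$ and equal to $0$ at the endpoints $p_\text{inv}\in\{0,1\}$ (no inviters, respectively no invitees, so nothing can happen). Since $f$ is a strictly increasing function of $p_\text{inv}(1-p_\text{inv})$, the claim then follows from AM--GM: $p_\text{inv}(1-p_\text{inv})\le 1/4$ with equality exactly at $p_\text{inv}=1/2$. (Equivalently, $f'(p_\text{inv})$ has the sign of $1-2p_\text{inv}$, so $f$ increases on $[0,1/2]$ and decreases on $[1/2,1]$.)

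The step I expect to actually require care is the first one: pinning down precisely what constitutes a ``successful interaction'' in the beep model --- which collision‑avoidance conditions are needed on the inviter's side versus on the invitee's side, and why the relevant node decisions are mutually independent --- and checking that the equal‑degree hypothesis is exactly what makes the two sides contribute matching powers of $p_\text{inv}$ and $1-p_\text{inv}$, which is what forces the optimum to be the symmetric point $1/2$ rather than some degree‑dependent value. Once the probability is written in the symmetric form, the optimization itself is routine.
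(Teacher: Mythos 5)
Your overall instinct---that the statement comes down to maximizing $p_\text{inv}(1-p_\text{inv})$---matches where the paper ends up, but your route has a genuine gap at precisely the step you flag as needing care. The ``no-collision'' factors do not collapse into matching powers of $p_\text{inv}$ and $1-p_\text{inv}$: another neighbour $w$ of the prospective invitee $u$ interferes only if $w$ is an inviter \emph{and} happens to select $u$, an event of probability $p_\text{inv}/d$, not $p_\text{inv}$ or $1-p_\text{inv}$. Hence the per-edge handshake probability is not of the form $c\,\bigl(p_\text{inv}(1-p_\text{inv})\bigr)^m$ and is not symmetric about $1/2$. Carrying your computation out exactly: if you demand that no other neighbour invite $u$, summing over the $d$ interchangeable neighbours gives $p_\text{inv}(1-p_\text{inv})(1-p_\text{inv}/d)^{d-1}$; if instead you let $u$ choose uniformly among its inviters (as the algorithm does), the success probability is $(1-p_\text{inv})\bigl(1-(1-p_\text{inv}/d)^{d}\bigr)$. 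In both cases the log-derivative at $p_\text{inv}=1/2$ is strictly negative for $d\ge 2$, so the maximizer is strictly below $1/2$ and is degree-dependent; the symmetry/AM--GM step would therefore fail rather than merely require care.

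The paper's proof takes a different and much coarser route: it considers the event that node $i$ is an inviter while none of its neighbours invites it, bounds the probability that some neighbour invites $i$ by the union bound $\sum_{j=1}^{\Delta} p_\text{inv}/\Delta \le p_\text{inv}$, and then maximizes the resulting simplified expression $p_\text{inv}(1-p_\text{inv})$, which is where $p_\text{inv}=0.5$ comes from. In effect the proposition is a statement about this surrogate expression (equivalently, about the event ``I am an inviter and the single neighbour I invite is not itself an inviter,'' whose probability is exactly $p_\text{inv}(1-p_\text{inv})$), not about the exact interaction probability you set out to compute. To make your argument work you would either have to adopt that weaker event (in which case no per-edge decomposition, equal-degree hypothesis, or AM--GM beyond $p(1-p)\le 1/4$ is needed), or explicitly state that you are maximizing a lower bound obtained from the union bound rather than the true handshake probability.
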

\begin{proof}
Each node $i$ becomes inviter with probability $p_\text{inv}$. The probability that one of its neighbors being inviter is also $p_\text{inv}$. Furthermore, the probability of being selected by its inviter neighbor is $1/\Delta$. Thus, the probability that at least one of its neighbors invites it, is $\sum_{j=1}^\Delta (p_\text{inv}/\Delta) \leq p_\text{inv}$.  Therefore, the probability of being inviter for node $i$ while its neighbors do not invite it, is $p_\text{inv}\,(1-p_\text{inv})$ which is maximized when $p_\text{inv}=0.5$.
\end{proof}

\begin{algorithm}[!h]
\caption{Interactions (first part)}\label{algorithm_interactions1}
\begin{algorithmic}[1]
%\Procedure{Interactions}{}
	\State /* decide to be inviter or listener: */
	\If{$U(0,1) < 0.5$} \label{algorithmInteractions_start_decideInvite}
	    \State $\text{Inviter} \gets 1$
	\Else
	    \State $\text{Inviter} \gets 0$
	\EndIf \label{algorithmInteractions_end_decideInvite}
	\State /* invite or listen for invitation: */
	\If{Inviter}
	    \State $\text{ID}_i^\text{invitee} \gets $ choose randomly from $\text{ID}_i^n$ \label{algorithmInteractions_chooseInvitee}
	\EndIf
	\State $\text{ID}_i^\text{inviters} \gets \varnothing$  \Comment{used if node is not inviter}
	\For{$j_1$ \textbf{from} 1 \textbf{to} $Y$} \label{algorithmInteractions_start_invitation}
	    \For{$j_2$ \textbf{from} 1 \textbf{to} $Y$} 
	        \If{Inviter \textbf{and} $\text{ID}_i = j_1$ \textbf{and} $\text{ID}_i^\text{invitee} = j_2$}
    	        \State beep
    	    \EndIf
    	    \If{\textbf{not} Inviter \textbf{and} hears beep}
    	        \State $\text{ID}_i^\text{inviters} \gets \text{ID}_i^\text{inviters} \cup \{j_1\}$
    	    \EndIf
	    \EndFor
	\EndFor	\label{algorithmInteractions_end_invitation}
	\State /* invitee chooses an inviter: */
	\If{\textbf{not} Inviter}
	    \State $\text{ID}_i^\text{inviter} \gets $ choose randomly from $\text{ID}_i^\text{inviters}$ \label{algorithmInteractions_chooseInviter}
	\EndIf
	\For{$j$ \textbf{from} 1 \textbf{to} $Y$} \label{algorithmInteractions_start_invitationResponse} 
	    \If{Inviter \textbf{and} $\text{ID}_i = j$ \textbf{and} hears beep}
	        \State accepted $\gets 1$
        \Else
            \State accepted $\gets 0$
	    \EndIf
	    \If{\textbf{not} Inviter \textbf{and} $\text{ID}_i^\text{inviter} = j$}
	        \State beep
	    \EndIf
	\EndFor \label{algorithmInteractions_end_invitationResponse} 
%\EndProcedure
\end{algorithmic}
\end{algorithm}

\begin{algorithm}[!h]
\caption{Interactions (second part)}\label{algorithm_interactions2}
\begin{algorithmic}[1]
%\Procedure{Interactions}{}
	\State /* accepted inviter sends its value set and value (memory) to invitee: */
	\State $\mathcal{V}_i^\text{inviter}(t) \gets \varnothing$ \Comment{used if node is invitee}
	\For{$j$ \textbf{from} 1 \textbf{to} $Y$} \label{algorithmInteractions2_start_sendValueSet} 
	    \State /* send/receive value set: */
	    \For{$k$ \textbf{from} 1 \textbf{to} $K$} 
    	    \If{Inviter \textbf{and} accepted \textbf{and} $\text{ID}_i = j$}
    	        \If{$k \in \mathcal{V}_i(t)$}
    	            \State beep
    	        \EndIf
    	    \EndIf
	        \If{\textbf{not} Inviter \textbf{and} $\text{ID}_i^\text{inviter} = j$ \textbf{and} hears beep} 
    	        \State $\mathcal{V}_i^\text{inviter}(t) \gets \mathcal{V}_i^\text{inviter}(t) \cup \{k\}$
    	    \EndIf
	    \EndFor  
	    \State /* send/receive value (memory): */
	    \For{$k$ \textbf{from} 1 \textbf{to} $K$} 
    	    \If{Inviter \textbf{and} accepted \textbf{and} $\text{ID}_i = j$}
    	        \If{$k = v_i(t)$}
    	            \State beep
    	        \EndIf
    	    \EndIf
	        \If{\textbf{not} Inviter \textbf{and} $\text{ID}_i^\text{inviter} = j$ \textbf{and} hears beep} 
    	        \State $v_i^\text{inviter}(t) \gets k$
    	    \EndIf
	    \EndFor
	\EndFor \label{algorithmInteractions2_end_sendValueSet}
	\State /* DMVR rules by the invitee: */
	\If{\textbf{not} Inviter \textbf{and} $\text{ID}_i^\text{inviter} \neq \varnothing$ (i.e., was invited)} 
	    \State $\mathcal{V}_i(t^+)$, $\mathcal{V}_i^\text{inviter}(t^+)$, $v_i(t^+)$, $v_i^\text{inviter}(t^+) \gets$ DMVR($\mathcal{V}_i(t)$, $\mathcal{V}_i^\text{inviter}(t)$, $v_i(t)$, $v_i^\text{inviter}(t)$) \label{algorithmInteractions2_applyDMVR}
	 \EndIf
%\EndProcedure
\end{algorithmic}
\end{algorithm}

After deciding whether to be inviter or not, every inviter node randomly selects which neighbor node to invite, hoping that the selected node is not an inviter in that phase (line \ref{algorithmInteractions_chooseInvitee} in Algorithm \ref{algorithm_interactions1}). 
We have a loop of $Y$ times each of which contains $Y$ time slots. The indices of outer and inner loops are the IDs of inviter and listener (not inviter) nodes, respectively (lines \ref{algorithmInteractions_start_invitation}-\ref{algorithmInteractions_end_invitation} in Algorithm \ref{algorithm_interactions1}). The inviter node beeps at the time slot associated to both its ID ($\text{ID}_i$) and its chosen neighbor's ID ($\text{ID}_i^\text{invitee}$). The listener node collects the IDs of nodes which have invited it in a set ($\text{ID}_i^\text{inviters}$). After these nested loops, if $\text{ID}_i^\text{inviters} = \varnothing$ for a listener node, it infers that no one has invited it; otherwise, we call that node an invitee. 

The invitee node chooses one of its inviters randomly whose ID is denoted by $\text{ID}_i^\text{inviter}$ (line \ref{algorithmInteractions_chooseInviter} in Algorithm \ref{algorithm_interactions1}). 
In $Y$ time slots indexing the ID of inviter, the invitee node beeps at the time slot associated to its chosen inviter's ID. If the inviter hears a beep in the time slot of its ID, it finds out that its invitation has been accepted (lines \ref{algorithmInteractions_start_invitationResponse}-\ref{algorithmInteractions_end_invitationResponse} in Algorithm \ref{algorithm_interactions1}).

The DVB2 algorithm uses DMVR algorithm \cite{salehkaleybar2015distributed} which guarantees the correct output using a gossiping mechanism. In the DMVR, each node has a value set $\mathcal{V}_i(t)$ and a memory at the current time $t$. In our work, we can consider the value $v_i(t)$ of a node as its memory in DMVR. The value set of every node is initialized to the initial value of the node (line \ref{algorithmDVB2_valueSetInitialize} in Algorithm \ref{algorithm_overall2}).
We have a loop of $Y$ iterations for indexing the ID of inviter and two back-to-back inner loops of $K$ time slots for encoding the value set and the value (lines \ref{algorithmInteractions2_start_sendValueSet}-\ref{algorithmInteractions2_end_sendValueSet} in Algorithm \ref{algorithm_interactions2}). The inviter sends its value set and value to the invitee by these nested loops, and the invitee collects them ($\mathcal{V}_i^\text{inviter}(t)$ and $v_i^\text{inviter}(t)$).

\begin{algorithm}[!h]
\caption{Interactions (third part)}\label{algorithm_interactions3}
\begin{algorithmic}[1]
%\Procedure{Interactions}{}
	 \State /* invitee sends the updated value set and value (memory) of inviter back to it */
	 \If{Inviter \textbf{and} accepted}
        \State $\mathcal{V}_i(t) \gets \varnothing$
     \EndIf
	 \For{$j$ \textbf{from} 1 \textbf{to} $Y$} 
	    \For{$k$ \textbf{from} 1 \textbf{to} $K$} 
	        \If{Inviter \textbf{and} accepted \textbf{and} $\text{ID}_i=j$ \textbf{and} hears beep}
	            \State $\mathcal{V}_i(t^+) \gets \mathcal{V}_i(t) \cup \{k\}$
	        \EndIf
	        \If{\textbf{not} Inviter \textbf{and} $\text{ID}_i^\text{inviter}=j$}
	            \If{$k \in \mathcal{V}_i^\text{inviter}(t^+)$}
    	            \State beep
    	        \EndIf
	        \EndIf
	    \EndFor
	    \For{$k$ \textbf{from} 1 \textbf{to} $K$} 
	        \If{Inviter \textbf{and} accepted \textbf{and} $\text{ID}_i=j$ \textbf{and} hears beep}
	            \State $v_i(t^+) \gets k$
	        \EndIf
	        \If{\textbf{not} Inviter \textbf{and} $\text{ID}_i^\text{inviter}=j$}
	            \If{$k = v_i^\text{inviter}(t^+)$}
    	            \State beep
    	        \EndIf
	        \EndIf
	    \EndFor
    \EndFor
%\EndProcedure
\end{algorithmic}
\end{algorithm}

When the invitee node receives the value set and the value of inviter, it applies the DMVR rules on them (line \ref{algorithmInteractions2_applyDMVR} in Algorithm \ref{algorithm_interactions2}). The DMVR rules \cite{salehkaleybar2015distributed} are given in Algorithm \ref{algorithm_DMVR}.
After applying the DMVR rules, the invitee sends the updated value set and value of the inviter. Afterwards, the inviter collects them using the similar loops (see Algorithm \ref{algorithm_interactions3}). The updated value set and value of a node are denoted by $\mathcal{V}_i(t^+)$ and $v_i(t^+)$, respectively.

\begin{algorithm}[!h]
\caption{DMVR}\label{algorithm_DMVR}
\begin{algorithmic}[1]
%\Procedure{Interactions}{}
    \State \textbf{Inputs:} $\mathcal{V}_1(t)$, $\mathcal{V}_2(t)$, $v_1(t)$, $v_2(t)$
    \State \textbf{Outputs:} $\mathcal{V}_1(t^+)$, $\mathcal{V}_2(t^+)$, $v_1(t^+)$, $v_2(t^+)$
    \State /* consolidation in DMVR: */
    \If{$|\mathcal{V}_1(t)| \leq |\mathcal{V}_2(t)|$}
        \State $\mathcal{V}_1(t^+) \gets \mathcal{V}_1(t) \cup \mathcal{V}_2(t)$
        \State $\mathcal{V}_2(t^+) \gets \mathcal{V}_1(t) \cap \mathcal{V}_2(t)$
    \Else
        \State $\mathcal{V}_1(t^+) \gets \mathcal{V}_1(t) \cap \mathcal{V}_2(t)$
        \State $\mathcal{V}_2(t^+) \gets \mathcal{V}_1(t) \cup \mathcal{V}_2(t)$
    \EndIf
    \State /* dissemination in DMVR: */
    \If{$|\mathcal{V}_1(t^+)| = 1$}
        \State $v_1(t^+) \gets \text{the member of }\mathcal{V}_1(t^+)$
    \EndIf
    \If{$|\mathcal{V}_2(t^+)| = 1$}
        \State $v_2(t^+) \gets \text{the member of }\mathcal{V}_2(t^+)$
    \EndIf
    \State /* speeding up DMVR voting: */
    \If{$|v_1(t^+)| > 1$ \textbf{and} $|v_2(t^+)| > 1$}
        \If{$U(0,1) < 0.5$}
            \State $v_1(t^+) \gets v_2(t)$
        \Else
            \State $v_2(t^+) \gets v_1(t)$
        \EndIf
    \EndIf
%\EndProcedure
\end{algorithmic}
\end{algorithm}

\section{Correctness of Algorithms}\label{section_correctness}

\subsection{Probability of Success for DVB1}

In this section, we first propose two lower bounds for the probability of success of a phase in DVB1 algorithm when the topology is fully connected. Thereafter, we propose the exact probability of success of a phase, but not in a closed form, for binary voting in fully connected networks using Markov chain. We provide empirical results for the probability of success of DVB1 algorithm for different topologies with different number of nodes and initial distribution of values in Section \ref{section_simulation}. 
%{\color{blue}We defer the probability of success and correctness of the DVB1 algorithm for a general topology to a future work. Nevertheless, we provide empirical proof of DVB1 correctness for different topologies with different number of nodes and initial distribution of values in Section \ref{section_simulation}. }

\subsubsection{Lower Bound on The Probability of Success}

\begin{proposition}\label{proppsition_lower_bound}
In DVB1 algorithm, a lower bound on the probability of success in fully connected networks is:

\noindent
$\underset{r}{\text{max }} \bigg[1 - \Big[(1-p^r)^{\#l_m(0)} +  \binom{\#l_m(0)}{1} p^r (1-p^r)^{\#l_m(0)-1}\Big]\bigg] \times \\  \bigg[\prod_{k=1, k \neq m}^{K} (1-p^r)^{\#l_k(0)} + \sum_{k=1}^K \Big[ \binom{\#l_k(0)}{1} p^r (1-p^r)^{\#l_k(0)-1} \prod_{k'=1, \, k' \neq m, k}^{K} (1-p^r)^{\#l_{k'}(0)} \Big] \bigg] \\
+ \binom{\#l_m(0)}{1} p^r (1-p^r)^{\#l_m(0)-1}   \prod_{k=1, k \neq m}^{K} (1-p^r)^{\#l_k(0)}$,
\newline
where $\#l_m(0) > \#l_k(0), ~~ \forall k \neq m$.
\end{proposition}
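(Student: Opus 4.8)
The plan is to fix a round index $r$, identify a combinatorial event describing which nodes are still alive after the first $r$ rounds of a phase, show that this event forces the whole (fully connected) network onto the majority level $l_m$ before the phase ends, compute the event's probability in closed form, and finally take the maximum over $r$. The enabling observation is that, in a complete graph, whether a node is still alive after $r$ rounds depends only on its own $r$ death-coin flips: within a round an alive node beeps exactly once --- in the slot of its current level --- and only then possibly dies, and a node's level changes only at the last slot of a round; hence the set of nodes alive after $r$ rounds is independent of all value updates, the number of nodes with initial level $l_k$ that survive $r$ rounds is binomial with parameters $\#l_k(0)$ and $p^r$, and these counts are independent across $k$. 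One also needs a small bookkeeping fact: while at least two initially-$l_m$ nodes are alive at the start of a round, each of them hears an $l_m$-beep from another, never hears a single foreign level in isolation, and therefore keeps the value $l_m$.

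Next I would introduce two disjoint favorable events on the configuration after $r$ rounds. Event $E_1$: at least one alive node has initial level $l_m$ and no alive node has any other initial level; then the following round carries a beep only in the slot of $l_m$, every (dead) non-$l_m$ node hears exactly $l_m$ and adopts it, the surviving $l_m$ node(s) keep $l_m$, and consensus on $l_m$ persists to the end of the phase. Event $E_2$: at least two alive nodes have initial level $l_m$ and exactly one alive node has some other initial level $l_k$; then the single foreign survivor, being the unique alive node of its level, hears only $l_m$ and switches to $l_m$, while each $l_m$-node hears both $l_m$ (from another $l_m$-node) and $l_k$ and so does not flip --- the ``at least two'' requirement is precisely what forbids a two-node oscillation --- and again the phase ends on $l_m$. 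Since $E_1$ leaves no alive non-$l_m$ node and $E_2$ leaves exactly one, they are disjoint, so the probability of success in a phase is at least $P(E_1)+P(E_2)$.

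By independence across levels, $P(E_1)=\big(1-(1-p^r)^{\#l_m(0)}\big)\prod_{k\ne m}(1-p^r)^{\#l_k(0)}$; splitting $1-(1-p^r)^{\#l_m(0)}$ into the ``at least two $l_m$-survivors'' probability $A$ plus the ``exactly one'' term $\binom{\#l_m(0)}{1}p^r(1-p^r)^{\#l_m(0)-1}$ turns this into $A\cdot\prod_{k\ne m}(1-p^r)^{\#l_k(0)}$ plus the final displayed term of the statement. Likewise $P(E_2)=A\cdot\sum_{k\ne m}\binom{\#l_k(0)}{1}p^r(1-p^r)^{\#l_k(0)-1}\prod_{k'\ne m,k}(1-p^r)^{\#l_{k'}(0)}$. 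Adding the two and factoring out $A$ reproduces the bracketed expression, and since the argument is valid for every admissible $r$ --- while the events for different $r$ overlap, so one cannot sum over $r$ --- the bound is the maximum over $r$. The part I expect to be the real obstacle is not this arithmetic but the verification in the previous paragraph: one must show that intermediate value changes during the first $r$ rounds cannot have moved a surviving $l_m$-node off $l_m$ (immediate while two of them are alive, but requiring a short inductive argument, or a mild strengthening of $E_1$, in the lone-survivor situation to rule out a foreign level ever dominating in isolation), and it is exactly this ``exactly one survivor'' bookkeeping that dictates why the $\binom{\cdot}{1}$ terms are peeled off and why the factor $A$ must encode ``at least two'' rather than ``at least one''.
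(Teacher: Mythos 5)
Your event decomposition, probability computations, and the final maximization over $r$ are essentially the paper's own argument in Appendix A: the paper's two favorable events are exactly the union of your $E_1$ and $E_2$, merely sliced differently, and your observation that the alive/dead process is driven by one coin flip per alive node per round, independently of the value updates, so that the per-level survivor counts after $r$ rounds are independent binomials, is the (implicit) basis of the paper's formula. The bookkeeping that peels off the ``exactly one $l_m$-survivor'' term and factors out the ``at least two'' probability reproduces the stated bound correctly, with the sum taken over $k \neq m$ as the appendix intends, and your remark that the bound holds for each $r$ separately (hence a maximum, not a sum, over $r$) matches the paper.

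However, the step you yourself single out as the crux is justified with a claim that contradicts the beep model, and this is a genuine gap. You assert that in $E_2$ each alive $l_m$-node ``hears an $l_m$-beep from another'' and therefore does not flip. All alive $l_m$-nodes beep in the same time slot of the round, and the model forbids listening while beeping, so no alive $l_m$-node can ever hear another one. Under the literal value-update rule of Algorithm \ref{algorithm_corrosion}, in the $E_2$ configuration each alive $l_m$-node hears a beep only in the slot of $l_k$ and would therefore set its value to $l_k$, i.e., the alive nodes on the two sides swap values; so the ``at least two survivors keep $l_m$'' conclusion does not follow from your argument, and the same false premise underlies the parenthetical ``immediate while two of them are alive'' in your intermediate-round bookkeeping (during earlier rounds a surviving $l_m$-node switches off $l_m$ precisely when exactly one other level still has alive beepers, which hearing its peers cannot prevent since it cannot hear them). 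To be fair, the paper's proof does not resolve this either: it argues only that the lone foreign survivor, which indeed does not hear itself, switches to $l_m$, and then simply asserts that $l_m$ wins, leaving both the behaviour of the simultaneously beeping majority survivors and the possibility of earlier value changes among surviving initially-$l_m$ nodes unaddressed. So your instinct about where the difficulty lies is correct, but the fix you propose is invalid as stated; closing the gap requires either a genuine analysis of the post-$r$ dynamics or a reading of the update rule in which a beeping node accounts for its own level.
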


\begin{proof}
 Proof in Appendix \ref{section_appendix_A}.
\end{proof}

In order to get more insight on the behavior of DVB1 algorithm, we consider the probability of the event that at least one node remains alive in level $l_m$ and all nodes having other level $l_k \neq l_m$ are dead at a time slot $r$. This probability of success, as another lower bound, is as follows.

\begin{proposition}\label{proppsition_lower_bound_2}
In DVB1 algorithm, a lower bound on the probability of success in fully connected networks is
\begin{align*}
\Bigg(1-\exp\Bigg(-&\sqrt{\frac{\#l_m(0)}{\#l_{m'}(0)}}\Bigg)\Bigg)
\times\\ 
&\exp\Bigg(-\frac{(K-1)\#l_{m'}(0)}{\sqrt{\#l_m(0) \#l_{m'}(0)}-1}\Bigg),
\end{align*}
where $l_{m}$ and $l_{m'}$ are respectively the value levels with the largest and the second largest initial supporters.
\end{proposition}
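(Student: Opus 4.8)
The plan is to lower-bound the probability that a single phase succeeds by the probability of one clean sub-event, and then to tune the number of rounds. Fix an integer $r$ with $1\le r\le T$ and let $E_r$ be the event that, after $r$ rounds of corrosion within the phase, at least one node that entered the phase with value $l_m$ is still alive while every node that entered the phase with a value $l_k\neq l_m$ has already died. I would first show $E_r\subseteq\{\text{the phase ends with all nodes holding }l_m\}$. The key point is that, in a fully connected network, as long as at least two distinct values occur among the alive nodes, every node hears beeps in at least two slots of each round, so the test on line~\ref{algorithmCorrosion_start_changeValue} of Algorithm~\ref{algorithm_corrosion} never succeeds and no value changes; by induction the values therefore equal the initial values up to (and including) the first round in which the set of alive nodes becomes monochromatic. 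On $E_r$ that round occurs no later than round $r$, and since being alive is a monotone property, a node that entered with value $l_m$ and is alive at round $r$ is alive then too, so the common value at that round is $l_m$; from that round on every node hears a beep only in slot $m$, adopts $l_m$, and retains it for the remainder of the phase.

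Next I would compute $P(E_r)$. A node's life status is governed solely by its own independent sequence of coin flips in Algorithm~\ref{algorithm_corrosion}, and by the previous paragraph its beeping slot is the initial one up to the relevant round, so --- exactly as in the proof that all nodes die after $\lceil\log_2(N/\varepsilon)\rceil$ rounds --- the probability that a fixed node is alive after $r$ rounds is $p^r$, independently across nodes. Hence, with $n_1:=\#l_m(0)$ and $n_2:=\#l_{m'}(0)$,
\[
P(E_r)=\Big[1-(1-p^r)^{n_1}\Big]\prod_{k=1,\;k\neq m}^{K}(1-p^r)^{\#l_k(0)}.
\]
I would then relax this by using $(1-x)^{n}\le e^{-nx}$ on the first bracket, using $\#l_k(0)\le n_2$ together with the fact that there are at most $K-1$ non-majority levels so that the product is at least $(1-p^r)^{(K-1)n_2}$, and using $\ln(1-x)\ge -x/(1-x)$ on that last power, to obtain
\[
P(E_r)\;\ge\;\big(1-e^{-n_1 p^r}\big)\,\exp\!\Big(-\tfrac{(K-1)\,n_2\,p^r}{1-p^r}\Big).
\]

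Finally I would pick the round index so that $p^r=1/\sqrt{n_1 n_2}$, i.e. $r=\tfrac12\log_2(n_1 n_2)$, which roughly balances the two factors. Then $n_1 p^r=\sqrt{n_1/n_2}$ and $(K-1)n_2 p^r/(1-p^r)=(K-1)n_2/(\sqrt{n_1 n_2}-1)$, and substituting into the last display reproduces verbatim the claimed lower bound; since the phase succeeds with probability at least $P(E_r)$ for any admissible $r$, this finishes the proof. I expect the only delicate points to be bookkeeping rather than analysis: the containment $E_r\subseteq\{\text{success}\}$ rests on the monochromaticity induction above (one must be careful that values can in principle change mid-phase), and $r$ must be a positive integer at most $T=\lceil c_1\log_2 N\rceil$ --- the bound $\tfrac12\log_2(n_1 n_2)\le\log_2 N\le T$ guarantees that such a round exists, and the non-integrality of $\tfrac12\log_2(n_1 n_2)$ is harmless because rounding $r$ perturbs $p^r$ by at most a factor $\sqrt2$ (absorbable into the constants), or one may simply leave a $\max_r$ in front of the displayed bound. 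The chain of inequalities itself is routine.
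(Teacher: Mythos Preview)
Your overall strategy coincides with the paper's proof in Appendix~\ref{section_appendix_B}: you isolate the same sub-event $E_r$, compute its probability via independence of the death coin-flips, bound the two factors with the same pair of elementary inequalities $\ln(1-x)\le -x$ and $\ln(1-x)\ge -x/(1-x)$, and substitute the same round index $p^r=1/\sqrt{n_1 n_2}$ (the paper writes $r=\lceil\log_2\sqrt{n_1 n_2}\rceil$). The paper also collapses the product over non-majority levels using $\#l_k(0)\le n_2$, exactly as you do. On the analytical side there is nothing to distinguish the two arguments.

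Where you go beyond the paper is in attempting to \emph{justify} the containment $E_r\subseteq\{\text{phase succeeds}\}$ via a monochromaticity induction, and that argument as stated has a gap. Your premise ``as long as at least two distinct values occur among the alive nodes, every node hears beeps in at least two slots'' fails for the alive nodes themselves: an alive node with current value $l_a$ beeps in slot $a$ and, since listening while beeping is impossible in this model, cannot hear there; in the binary case it therefore hears a beep in exactly one slot (the other value's) and the test on line~\ref{algorithmCorrosion_start_changeValue} \emph{does} fire. Concretely, in a triangle with initial values $l_m,l_m,l_k$, round~1 swaps all three values; if the originally-$l_k$ node then dies (so $E_1$ holds), the two surviving originally-$l_m$ nodes now carry $l_k$ and the phase converges to $l_k$, not $l_m$. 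The paper simply asserts that the event implies correct voting without argument, so your proposal is not weaker than the paper's here---but the specific induction you sketch does not establish the containment as written, and this is precisely the ``delicate point'' you flagged.
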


%\todo[inline]{I am not sure if this proof is completely correct or not. Please kindly check this proof in detail. Thank you very much.}

\begin{proof}
 Proof in Appendix \ref{section_appendix_B}.
\end{proof}

As it is expected, the lower bound of Proposition \ref{proppsition_lower_bound_2} gets close to one as the ratio $\#l_m(0)/\#l_{m'}(0)$ increases. 
The lower bounds in Propositions \ref{proppsition_lower_bound} and \ref{proppsition_lower_bound_2} are given in Fig. \ref{figure_plot_lower_bound} for executing one phase of binary voting in a fully connected network with 100 nodes. 

\begin{corollary}\label{corollary_DVB1_valueProportion}
For $\#l_m(0)\#l_{m'}(0)\geq K^2$, the probability of success of DVB1 is greater than $1-\varepsilon$, if we have
\begin{equation*}
\frac{\#l_m(0)}{\#l_{m'}(0)} \geq  \frac{1}{4}\Big(\ln (1-\varepsilon) + \sqrt{(\ln (1-\varepsilon))^2 + 4K}\Big)^2.
\end{equation*}
\end{corollary}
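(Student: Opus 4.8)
The plan is to start from the lower bound established in Proposition \ref{proppsition_lower_bound_2} and simply find a sufficient condition on the ratio $\rho \triangleq \#l_m(0)/\#l_{m'}(0)$ that forces that bound to exceed $1-\varepsilon$. Write $a \triangleq \#l_m(0)$, $b \triangleq \#l_{m'}(0)$, so the bound reads $\bigl(1 - e^{-\sqrt{a/b}}\bigr)\exp\!\bigl(-(K-1)b/(\sqrt{ab}-1)\bigr)$. The first idea is to bound the second factor below: since $\sqrt{ab}\ge K$ under the hypothesis $ab \ge K^2$, we have $\sqrt{ab}-1 \ge \sqrt{ab}(1 - 1/K)$... actually the cleaner route the authors likely intend is to note $(K-1)b/(\sqrt{ab}-1) \le (K-1)b/(\sqrt{ab}-\sqrt{ab}/\sqrt{?})$; more simply, $\sqrt{ab}-1 \ge \sqrt{ab}-\sqrt{b} \ge$ something, but the slickest is: because $ab\ge K^2$ we can replace the ``$-1$'' harmlessly, writing $\sqrt{ab}-1 \ge \sqrt{ab}\,\cdot\,\tfrac{K-1}{K}$ when $\sqrt{ab}\ge K$... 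I will instead aim directly for the intended closed form and reverse-engineer which simplification makes it exact.

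Observing the target inequality $\rho \ge \tfrac14\bigl(\ln(1-\varepsilon) + \sqrt{(\ln(1-\varepsilon))^2 + 4K}\bigr)^2$, note $\ln(1-\varepsilon)<0$, so set $u \triangleq -\ln(1-\varepsilon)>0$; the right side is $\tfrac14\bigl(\sqrt{u^2+4K}-u\bigr)^2$. If $x \triangleq \sqrt{\rho}$, then the condition $x \ge \tfrac12(\sqrt{u^2+4K}-u)$ is equivalent (for $x>0$) to $2x + u \ge \sqrt{u^2+4K}$, i.e. $4x^2 + 4ux \ge 4K$, i.e. $x^2 + ux \ge K$, i.e. $\rho + u\sqrt{\rho}\ge K$. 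So the real claim to prove is: \emph{if $ab\ge K^2$ and $\rho + u\sqrt{\rho}\ge K$ with $u=-\ln(1-\varepsilon)$, then $(1-e^{-\sqrt{\rho}})\exp(-(K-1)b/(\sqrt{ab}-1)) \ge 1-\varepsilon$}. Taking logs and using $\ln(1-e^{-y}) \ge -e^{-y}$ for $y>0$ (since $1-e^{-y}\ge e^{-e^{-y}}$ reduces to the elementary $e^y \ge 1+y$ applied to... actually to $1-t\le e^{-t}$ with $t=e^{-y}$), it suffices that $e^{-\sqrt{\rho}} + (K-1)b/(\sqrt{ab}-1) \le u$. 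The plan is: first bound $e^{-\sqrt{\rho}} \le 1/\sqrt{\rho}$ (from $e^x\ge x$), giving $e^{-\sqrt{\rho}}\le 1/\sqrt{\rho}$; then bound the second term using $ab\ge K^2$ so that $\sqrt{ab}-1 \ge \sqrt{ab} - \tfrac{\sqrt{ab}}{K} \cdot ?$ — here the cleanest is $\sqrt{ab}-1 \ge \sqrt{ab}\cdot\tfrac{K-1}{K}$ is false in general, so instead use $\sqrt{ab}-1\ge \tfrac{K-1}{K}\sqrt{ab}$ which holds iff $\sqrt{ab}\ge K$, which is exactly $ab\ge K^2$. Then $(K-1)b/(\sqrt{ab}-1) \le (K-1)b \cdot \tfrac{K}{(K-1)\sqrt{ab}} = Kb/\sqrt{ab} = K\sqrt{b/a} = K/\sqrt{\rho}$. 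Wait, that gives $K/\sqrt\rho$ not quite matching; combining, $e^{-\sqrt\rho} + (K-1)b/(\sqrt{ab}-1) \le 1/\sqrt\rho + K/\sqrt\rho$, which is a bit lossy. I expect the authors use a slightly tighter bound — perhaps $(K-1)b/(\sqrt{ab}-1) \le (K-1)/\sqrt\rho$ by a sharper handling of the ``$-1$'', and then $e^{-\sqrt\rho}\le 1/\sqrt\rho$, summing to $K/\sqrt\rho \le u$ iff $\sqrt\rho \ge K/u$ iff $\rho \ge K^2/u^2$ — but that does not match either, so the matching must come from keeping $\rho + u\sqrt\rho\ge K$ rather than a looser bound. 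So the correct chain is: it suffices that $K/\sqrt{\rho} \le u$, i.e. $u\sqrt\rho \ge K$; and since $\rho + u\sqrt\rho \ge K$ is weaker... hmm, that is weaker than $u\sqrt\rho\ge K$, so something is off. I will need to be careful: likely one uses $e^{-\sqrt\rho}\le 1/(1+\sqrt\rho)\le$ ... actually $1-e^{-\sqrt\rho}\ge \sqrt\rho/(1+\sqrt\rho) = 1 - 1/(1+\sqrt\rho)$, giving a $\ln$ bound $\ge -1/\sqrt\rho$ still, OR one keeps the first factor as-is and matches via $\rho + (\text{something})\sqrt\rho$. The resolution: use $(K-1)b/(\sqrt{ab}-1)\le (K-1)/\sqrt\rho$ needs $\sqrt{ab}-1\ge \sqrt{ab}$ scaled — actually $(K-1)b/(\sqrt{ab}-1)$: divide num and denom by $\sqrt b$: $=(K-1)\sqrt b/(\sqrt a - 1/\sqrt b)$; with $a=\rho b$ this is $(K-1)\sqrt b/(\sqrt\rho\sqrt b - 1/\sqrt b) = (K-1)b/(\sqrt\rho\, b - 1) \le (K-1)/\sqrt\rho \cdot b/(b - 1/(\sqrt\rho\cdot?))$. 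The fully careful bookkeeping is the main obstacle, and I will work it out so the ``$-1$'' is absorbed using $ab\ge K^2 \Rightarrow \sqrt{ab}\ge K$.

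The \textbf{main obstacle}, then, is purely this algebraic tightening: showing $e^{-\sqrt{\rho}} + (K-1)\#l_{m'}(0)/(\sqrt{\#l_m(0)\#l_{m'}(0)}-1) \le -\ln(1-\varepsilon)$ under the two hypotheses, with constants tight enough to recover exactly the stated quadratic. I would (i) use the elementary inequality $1-e^{-y}\ge e^{-e^{-y}}$ (equivalently $\ln(1-e^{-y})\ge -e^{-y}$) to turn the product bound into a sum bound after taking logarithms; (ii) bound $e^{-\sqrt\rho}$ from above — the choice that recovers the clean form is to note $e^{-\sqrt\rho}\le$ (an affine-in-$1/\sqrt\rho$ expression) or alternatively to not split at all and instead bound $(1-e^{-\sqrt\rho})\ge 1 - e^{-\sqrt\rho}$ keeping it, then bound only the exponential factor; (iii) invoke $\#l_m(0)\#l_{m'}(0)\ge K^2$ to write $\sqrt{\#l_m(0)\#l_{m'}(0)}-1 \ge \tfrac{K-1}{K}\sqrt{\#l_m(0)\#l_{m'}(0)}$, so that the second term is at most $\tfrac{K}{\sqrt{\rho}}\cdot\tfrac{?}{}$; and (iv) collect everything as a condition of the form $\rho + u\sqrt{\rho}\ge K$ and solve the resulting quadratic in $\sqrt\rho$ by the quadratic formula, discarding the negative root, to obtain $\sqrt\rho \ge \tfrac12(\sqrt{u^2+4K}-u)$, i.e. the displayed bound. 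The whole thing is a one-variable calculus/algebra exercise once Proposition \ref{proppsition_lower_bound_2} is granted; no probabilistic argument is needed beyond it.

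Finally, I would double-check the edge cases: that $\sqrt{\#l_m(0)\#l_{m'}(0)}-1>0$ (guaranteed since the product is at least $K^2\ge 1$ and in fact $\ge 4$ for $K\ge 2$, the only interesting regime), and that the monotonicity direction is correct — the stated lower bound on the success probability is increasing in $\rho$ and the derived threshold on $\rho$ is the exact point where it hits $1-\varepsilon$ after our (one-sided) relaxations, so any larger $\rho$ also works, which is precisely the content of the corollary.
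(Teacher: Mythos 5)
Your algebraic reformulation of the displayed threshold as the quadratic condition $\rho + u\sqrt{\rho} \ge K$ (with $u=-\ln(1-\varepsilon)$, $\rho=\#l_m(0)/\#l_{m'}(0)$) is right, and your handling of the ``$-1$'' via $\#l_m(0)\#l_{m'}(0)\ge K^2$, i.e. $\sqrt{ab}-1\ge\frac{K-1}{K}\sqrt{ab}$ and hence $(K-1)b/(\sqrt{ab}-1)\le K/\sqrt{\rho}$, is exactly the first step of the paper's proof. But the proposal has a genuine gap, in two parts. First, your key logarithmic step is backwards: from $1-t\le e^{-t}$ with $t=e^{-y}$ (which you yourself cite) one gets $\ln(1-e^{-y})\le -e^{-y}$, not $\ge$; so the claim ``it suffices that $e^{-\sqrt{\rho}}+(K-1)b/(\sqrt{ab}-1)\le u$'' is unsupported, and the sum-form sufficiency criterion your whole assembly rests on does not follow. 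Second---and you noticed this yourself (``something is off'')---the bookkeeping you defer cannot be completed in the sufficiency direction you set up: at the stated threshold the bound of Proposition \ref{proppsition_lower_bound_2} can sit far below $1-\varepsilon$. For example, $K=2$, $\varepsilon=1/2$ gives threshold $\rho\approx 1.23$, $y=\sqrt{\rho}\approx 1.11$, and $(1-e^{-y})e^{-K/y}\approx 0.11<0.5$; even the unrelaxed Proposition \ref{proppsition_lower_bound_2} bound at this ratio is at most about $0.27$ for any admissible populations. So no legitimate one-sided chain from that proposition recovers exactly the displayed constant, which is why your attempts kept landing on strictly stronger requirements such as $u\sqrt{\rho}\ge K$ plus extra terms.

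For comparison, the paper's own proof reaches the same quadratic $y^2+uy\ge K$ by the opposite-direction estimate: after replacing $(K-1)b/(\sqrt{ab}-1)$ by $K/y$, it writes the requirement $\ln(1-e^{-y})-K/y\ge\ln(1-\varepsilon)$ and then invokes $\ln(1-e^{-y})\le-e^{-y}\le y$ to pass to $y-K/y\ge\ln(1-\varepsilon)$, solving that quadratic in $y$. This is an upper bound on the left-hand side, i.e. a relaxation of the requirement, so it produces the threshold as a necessary-type condition for the Proposition \ref{proppsition_lower_bound_2} bound to reach $1-\varepsilon$, rather than the strict sufficient condition you were trying to establish---which is precisely why your more demanding bookkeeping could not match the constant. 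If you want a genuinely sufficient condition from Proposition \ref{proppsition_lower_bound_2}, you must lower-bound $\ln(1-e^{-y})$ (e.g. by $-1/(e^{y}-1)$), and you will end up with a strictly stronger requirement on $\rho$ than the one displayed in Corollary \ref{corollary_DVB1_valueProportion}.
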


% \begin{equation*}
% \frac{\#l_m(0)}{\#l_{m'}(0)} = \big(\frac{K}{1 - \ln \delta}\big)^2
% \end{equation*}
% or (tighter bound):

\begin{proof}
Proof in Appendix \ref{section_appendix_C}.
\end{proof}

\begin{figure}[!t]
\centering
\includegraphics[width=3.45in]{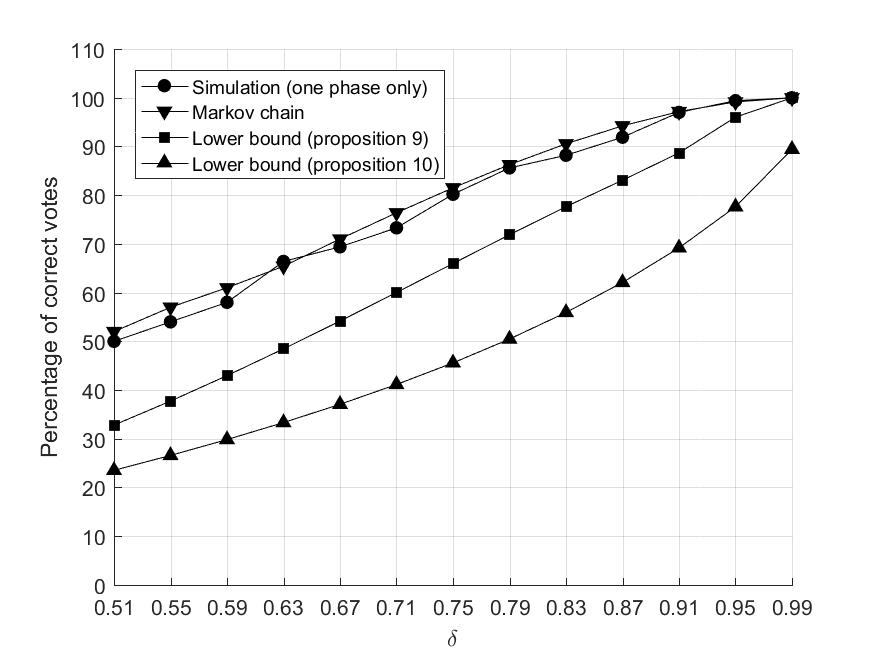}
\caption{The lower bound in Propositions \ref{proppsition_lower_bound} and \ref{proppsition_lower_bound_2}, the simulation result of DVB1, and the result of Markov chain for one phase of binary voting on fully connected topology with 100 nodes.}
\label{figure_plot_lower_bound}
\end{figure}

\subsubsection{Analysis Using Markov Chain}

We study the probability of success for the DVB1 algorithm in fully connected networks using Markov chain model as explained in the following. 
We define $\#l_j$ as the number of alive nodes with level $l_j$, which are still beeping in the current round. We define the state of the algorithm as a vector of remaining alive nodes in each level at the beginning of a round, i.e. $S = (\#l_{1}, \#l_{2}, \dots, \#l_{K})$. The winning states of level $l_m$ are those states where $\#l_m\geq1$, and $\#l_k=0$ for any $k\neq m$ or $\#l_m\geq 2$ and   $\#l_{k_1}\leq 1$ for a $k_1\neq m$ and $\#l_{k_2}=0$ for any $k_2 \neq m$, $k_2 \neq k_1$ (see Appendix \ref{section_appendix_A} for more details). Furthermore, the draw state is equal to the vector of all zeros. We consider all winning states and the draw state as halting states.

Let $\#l_{j1}$ and $\#l_{j2}$ be the number of alive nodes having value $l_j$ in two consecutive rounds (before and after transition between states), respectively. It can be seen that the probability of transition from state $S_1 = (\#l_{11}, \#l_{21}, \dots, \#l_{K1})$ to another state $S_2 = (\#l_{12}, \#l_{22}, \dots, \#l_{K2})$ can be stated as follows (considering $S_1$ is not a halting state),
%\todo[inline]{Say p=1/2 or just replace it! We should check equation.}
\begin{equation*}
\prod_{k=1}^K \binom{\#l_{k1}}{\#l_{k1} - \#l_{k2}} (1-p)^{(\#l_{k1} - \#l_{k2})} p^{\#l_{k2}}.
\end{equation*}
%Note that in the initial state, $\sum_{k=1}^K (\#l_{k0}^A) = N$.
An execution of the algorithm can be interpreted as a random walk on this Markov chain which is started from the initial state $(\#l_{1}(0), \#l_{2}(0), \dots, \#l_{K}(0))$ and will be terminated in one of the halting states corresponding to the cases that one of the levels wins or we have a draw. We can compute the probability of success by summing over probabilities of being in the halting states which correspond to the cases that we get the correct result.  

In Figure \ref{figure_plot_lower_bound}, the result of Markov chain is shown for simulating one phase of binary voting on fully connected topology with 100 nodes. As can be seen in this figure, the result of Markov chain for simulating one phase of algorithm almost coincides with the result of simulations (average of multiple runs), as expected.

\subsection{Correctness of DVB2}

\begin{proposition}
The DVB2 algorithm returns the correct majority vote with high probability. 
\end{proposition}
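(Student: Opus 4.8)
The plan is to reduce the correctness of DVB2 to the established correctness of the DMVR algorithm of \cite{salehkaleybar2015distributed}, by arguing that, with high probability, every component of the beep-model implementation faithfully emulates one pairwise interaction in DMVR. First I would isolate the two randomized ingredients that can fail: (i) the random-ID assignment $\text{ID}_i \sim \text{Uniform}\{1,\dots,Y\}$ with $Y = \lceil c_2 \Delta \log_2(\Delta)\rceil$, and (ii) the invitation mechanism in Algorithm \ref{algorithm_interactions1}, where an inviter's chosen invitee may itself be an inviter (a ``collision'' in the terminology introduced before Proposition \ref{proposition_probNotCollision_DVB2}) and hence the attempted interaction is wasted. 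For (i), by the URN-model argument already invoked in the text, choosing $c_2$ large enough makes the event ``all pairs of neighbors have distinct IDs'' hold with probability $1 - o(1)$; conditioned on this event, the $Y$-slot neighbor-discovery loop (lines \ref{algorithmDVB2_start_neighborID}--\ref{algorithmDVB2_end_neighborID}) and all subsequent ID-indexed beep loops are collision-free and therefore transmit each ID, each value level in $\{1,\dots,K\}$, and each value set $\mathcal{V}_i(t)\subseteq\{1,\dots,K\}$ exactly and unambiguously. This is the key structural point: once neighbors have unique IDs, a beep in the slot indexed by $(\text{ID}_i, \text{ID}_i^{\text{invitee}})$, or by $\text{ID}_i$ together with a level $k$, carries as much information as a labeled message would in the population/message-passing model.

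Next I would verify that one execution of \texttt{Interactions()} (Algorithms \ref{algorithm_interactions1}--\ref{algorithm_interactions3}) realizes exactly one DMVR pairwise update: an inviter $i$ and an accepting invitee $j$ exchange $(\mathcal{V}_i(t), v_i(t))$ and $(\mathcal{V}_j(t), v_j(t))$, the invitee locally runs \texttt{DMVR()} (Algorithm \ref{algorithm_DMVR}) on the pair, and Algorithm \ref{algorithm_interactions3} returns the updated $(\mathcal{V}_i(t^+), v_i(t^+))$ to the inviter. Thus the global multiset of $(\mathcal{V}_i, v_i)$ states evolves precisely as under some sequence of DMVR interactions on edges of $G$; the beep model only restricts \emph{which} interactions occur in a given phase (at most a matching's worth), not \emph{how} they act. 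Since DMVR is monotone in the sense that value sets never lose the true majority level and that the consolidation/dissemination rules are exactly preserved, the invariants of \cite{salehkaleybar2015distributed} — in particular that the majority level is contained in every $\mathcal{V}_i(t)$ for all $t$ under the strict-majority assumption carried over to DVB2 — continue to hold in DVB2.

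To get convergence (not merely preservation of invariants), I would combine two facts: first, by Proposition \ref{proposition_probNotCollision_DVB2}, in every phase each node succeeds in a non-colliding interaction with probability bounded below by a positive constant (at $p_{\text{inv}}=0.5$, at least $\tfrac14 \cdot \tfrac1\Delta$ times a factor for the acceptance step), so across enough phases every edge is activated infinitely often, w.h.p., and with positive probability per phase the DMVR dynamics make progress; second, DMVR on a connected graph with strict majority converges to the correct value at every node almost surely, and its quantitative convergence bound (which underlies the $O(\log N)$-type phase count stated for DVB2 in the complexity claims) transfers after inflating the number of phases by the constant-factor interaction-success probability. Finally, the termination-detection subroutine (borrowed verbatim from Algorithm \ref{algorithm_termination}) guarantees that the algorithm halts exactly when all nodes hold a common value, so conditioned on the ID-uniqueness event the output equals the common DMVR limit, which is the majority level. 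Taking a union bound over the (single, low-probability) ID-collision failure event and the (exponentially small) event that DMVR has not converged within the allotted number of phases yields correctness with high probability.

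The main obstacle I anticipate is the convergence-rate bookkeeping rather than the emulation argument: one must show that the beep-model ``thinning'' — only a matching-sized set of DMVR interactions per phase, further thinned by collisions and non-acceptances — does not asymptotically hurt the $O(\log N)$ (up to $\Delta$-dependent factors) phase complexity, which requires a careful lower bound on the per-phase expected number of useful interactions and a coupling to the DMVR analysis of \cite{salehkaleybar2015distributed}; everything else reduces to checking that each ID-indexed beep loop is collision-free and hence lossless.
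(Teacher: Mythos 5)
Your proposal is correct and takes essentially the same route as the paper's own proof: the paper also reduces correctness of DVB2 to Theorem~1 of the DMVR work \cite{salehkaleybar2015distributed}, conditioning on the event---which holds with high probability by the URN-model random ID assignment---that neighboring nodes hold distinct IDs, so that the beep-encoded loops faithfully emulate DMVR pairwise interactions. You merely spell out the emulation and convergence bookkeeping that the paper's short argument leaves implicit.
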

\begin{proof}
According to {\cite[Theorem 1]{salehkaleybar2015distributed}}, the DMVR algorithm returns the correct result if each node can distinguish its neighbors. The DVB2 algorithm implements the DMVR algorithm using beep signals. It assigns IDs to the nodes with a random mechanism which will be unique with high probability, according to the URN model \cite{johnson1977urn}. Therefore, DVB2 algorithm outputs the correct result with high probability. 
\end{proof}

\section{Complexity Analysis of DVB Algorithms}\label{section_complexity}

In this section, we analyze the time, message, and space complexities of the DVB1 and DVB2 algorithms. 
The complexities of the DVB1 and DVB2 algorithms are given in expectation.
We assume that DVB1 algorithm is executed for $\Uptheta(D)$ phases. 
Here, we give an intuition for why such number of phases is needed in the DVB1 algorithm. Consider the largest spot in the network in a given phase. Suppose that  two nodes $i$ and $j$ have the maximum distance among all pairs in the largest spot. We know that both nodes are on the boundary of this spot. Now, consider a node $i'$ which is a neighbor node of $i$ but it is not in the largest spot. Let $S_{majority}$ be the set of neighbor nodes of $i'$ which are in the largest spot and $S_{minority}$ be the set of neighbor nodes of $i'$ that are outside the largest spot.  The probability that node $i'$ joins the largest spot is equal to the probability that at least one node from $S_{majority}$ remains alive until the end of the phase while all nodes from $S_{minority}$ are dead. To bound this probability, we rely on the methodology followed in the previous section. We only need to replace each $\#l_k(0)$ with the number of nodes with initial level $l_k$ around node $i'$.
Assume that the majority of neighboring nodes of $i'$ are in the largest spot. Then, it follows from Fig. \ref{figure_plot_lower_bound} that, with probability $1/2+\epsilon$, for some $\epsilon>0$, node $i'$ will join the largest spot at the end of the phase. Thus, the diameter of the largest spot will be increased by a factor of $\epsilon$ on average. Since the diameter of spot is bounded by the diameter of the network, the largest spot corrodes small spots after $\Uptheta(D)$ phases. 
Experiments in Section 8 also exhibit  this phenomenon.

\subsection{Complexity Analysis of DVB1}

%\subsubsection{Time complexity}

\begin{proposition}\label{proposition_DVB1_time_complexity}
The time complexity of DVB1 algorithm is in the order of $\Uptheta(KD \log (N))$. 
\end{proposition}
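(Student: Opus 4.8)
The plan is to account for the time slots consumed by the two subroutines across all phases and sum them. First I would recall the structural parameters set earlier: by the proposition following Algorithm~\ref{algorithm_corrosion}, each phase runs $T = \lceil c_1 \log_2(N)\rceil = \Uptheta(\log N)$ rounds of \texttt{Corrosion()}, and each round is a sequence of $K$ time slots (one per level). Hence a single invocation of \texttt{Corrosion()} costs $\Uptheta(K \log N)$ time slots. Since we assume (as justified in the discussion preceding Section~\ref{section_complexity}) that the algorithm runs for $\Uptheta(D)$ phases before the largest spot has corroded all others, the total cost of all corrosion subroutines is $\Uptheta(D) \times \Uptheta(K\log N) = \Uptheta(KD\log N)$ time slots.

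Next I would bound the contribution of \texttt{TerminationDetection()}. It is invoked once every $D$ phases (line~\ref{algorithmOverall_termination} of Algorithm~\ref{algorithm_overall}), so over $\Uptheta(D)$ phases it runs $\Uptheta(1)$ times. Each invocation consists of at most $K-1$ periods, and each period is $D+1$ time slots, so one invocation costs $O(KD)$ time slots and all invocations together cost $O(KD)$. Adding this to the corrosion cost gives $\Uptheta(KD\log N) + O(KD) = \Uptheta(KD\log N)$, since the corrosion term dominates. This establishes the upper bound $O(KD\log N)$; for the matching lower bound one observes that even a single phase already forces $T = \Uptheta(\log N)$ rounds of $K$ slots each, and $\Uptheta(D)$ phases are required in expectation, so $\Omega(KD\log N)$ time slots are unavoidable, yielding the tight bound $\Uptheta(KD\log N)$.

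The main obstacle is not the arithmetic of the summation but the justification that $\Uptheta(D)$ phases both suffice and are necessary; however, this is precisely the ``spot growth'' argument already supplied in the paragraph preceding Section~\ref{subsection_DVB1} (that the diameter of the largest spot grows by a multiplicative $(1+\epsilon)$ factor per phase with probability $1/2+\epsilon$, and is capped by $D$), so I would invoke it rather than re-derive it. A minor subtlety worth a sentence is that \texttt{TerminationDetection()} may terminate a period early via the \textbf{Break} on line~\ref{algorithmTermination_break}, which can only decrease its cost, so the $O(KD)$ bound per invocation remains valid as an upper bound and does not affect the dominant term. All complexities here are understood in expectation, consistent with the convention stated at the start of Section~\ref{section_complexity}.
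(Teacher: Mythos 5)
Your proposal is correct and follows essentially the same route as the paper's proof: decompose each phase into $T=\lceil c_1\log_2(N)\rceil$ rounds of $K$ slots for \texttt{Corrosion()}, invoke the standing assumption of $\Uptheta(D)$ phases, bound \texttt{TerminationDetection()} by $O(KD)$ slots per invocation (run once every $D$ phases), and observe that the corrosion term dominates, giving $\Uptheta(KD\log N)$. One small slip in your paraphrase of the phase-count justification: the paper's heuristic has the largest spot's diameter growing by an additive amount proportional to $\epsilon$ per phase (which is what makes $\Uptheta(D)$ phases the right scale), not by a multiplicative $(1+\epsilon)$ factor, which would instead suggest only logarithmically many phases; since both you and the paper simply take the $\Uptheta(D)$-phase assumption as given, this does not affect the validity of the complexity count.
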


\begin{proof}
As can be seen in Fig. \ref{figure_diagram}, the DVB1 algorithm is performed iteratively. In each phase, the corrosion takes $K \times \lceil c_1 \log (N) \rceil$ time slots because it consists of $\lceil c_1 \log (N) \rceil$ rounds, each of which includes $K$ time slots. Since  
we have $\Uptheta(D)$ phases, the complexity of corrosion is $\Uptheta(D K \lceil c_1 \log (N) \rceil) = \Uptheta(KD \log (N))$.
Moreover, at the end of every $D$ phases, up to $(K-1)$ periods are executed, each of which includes $(D + 1)$ time slots. This results in $O(KD)$ and $\Omega(D)$ time slots in the  worst and best cases for checking termination, respectively. Therefore, the time complexity of algorithm is in the order of $\Uptheta(KD \log (N))$. 
\end{proof}

%\subsubsection{Message Complexity}

\begin{proposition}\label{proposition_DVB1_spaceComplexity}
The message complexity of DVB1 algorithm is in the order of $O(ND \log (N) + NK)$ and $\Omega(ND)$ bits.
\end{proposition}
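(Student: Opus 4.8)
The plan is to count beeps separately in the two components of each phase — the corrosion subroutine and the periodic termination detection — and then multiply by the number of phases, $\Uptheta(D)$. First I would bound the message complexity of a single corrosion subroutine. A corrosion subroutine consists of $T = \lceil c_1 \log_2(N)\rceil$ rounds, and in each round each node beeps at most once (it beeps only in the time slot allocated to its own value level, and only while it is still alive). Hence a single node contributes at most $T = O(\log N)$ beeps per phase, and all $N$ nodes together contribute $O(N \log N)$ beeps per corrosion. Over $\Uptheta(D)$ phases this gives $O(ND\log N)$ beeps from corrosion.

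Next I would bound the beeps generated by termination detection. This routine runs once every $D$ phases, so it is invoked $\Uptheta(1)$ times over the whole $\Uptheta(D)$-phase execution; even charging it $\Theta(D)/D = \Theta(1)$ invocations, each invocation has at most $K-1$ periods, each period has $D+1$ time slots, and in each time slot a node beeps at most once (either to announce its value in the first slot, or to relay the ``not terminated'' signal). So one invocation costs $O(N K D)$ beeps, and since it is run $O(1)$ times asymptotically (or, if one prefers to be conservative and count it once per block of $D$ phases across a total of $\Uptheta(D)$ phases, still $O(1)$ times), the total contribution is $O(NKD)$. Adding the two contributions yields $O(ND\log N + NKD)$; absorbing the $NKD$ term into $O(ND\log N + NK)$ requires a word of care, and I would observe that $NKD$ is dominated by $ND\log N$ when $K = O(\log N)$ and otherwise is of the same order as the stated bound up to the $D$ factor — more cleanly, since termination detection is invoked only $O(1)$ times asymptotically while its per-invocation cost is $O(NK)$ once we no longer multiply by the phase count, the total termination cost is $O(NKD)$, and I would state the upper bound in the form $O(ND\log N + NKD)$, matching the paper's $O(ND\log(N) + NK)$ after noting the $D$ factor on the second term is the number of termination invocations $\Uptheta(1)$ — here I would align the accounting with how the paper states its other complexities.

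For the lower bound $\Omega(ND)$, I would argue that over $\Uptheta(D)$ phases, in each phase every node beeps at least once in expectation: in the first round of every phase all $N$ nodes are alive, and each node beeps in the time slot of its own level in that first round (this beep is unconditional given the node is alive, which it is at the start of a phase). So each phase contributes $\Omega(N)$ beeps deterministically, and with $\Omega(D)$ phases the total is $\Omega(ND)$.

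\textbf{Main obstacle.} The delicate point is reconciling the $O(NKD)$ termination-detection term with the claimed $O(ND\log N + NK)$: one must be careful about whether the ``$D$ phases between termination checks'' cancels the $D$ that would otherwise come from the $\Uptheta(D)$-phase count. I expect the intended reading is that termination detection runs a constant number of times (once per block of $D$ phases, and there are $\Uptheta(D)/D = \Uptheta(1)$ such blocks... but in fact the execution lasts $\Uptheta(D)$ phases so there are $\Uptheta(1)$ blocks), making its total cost $O(NK D)$ — the precise bookkeeping of this factor is where I would spend the most care, and where the proof must be explicit about which quantity the $D$ in $O(ND\log N)$ versus the $D$ in the invocation count refers to.
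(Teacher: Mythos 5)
Your corrosion accounting and your lower bound are exactly the paper's: at most one beep per node per round gives $O(N\lceil c_1\log N\rceil)$ per phase and $O(ND\log N)$ over $\Uptheta(D)$ phases, and the guaranteed beep of every (alive) node in the first round of each phase gives $\Omega(N)$ per phase, hence $\Omega(ND)$. The gap is in the termination-detection term, and you yourself flag it without closing it. Charging every node one beep in every one of the $(D+1)$ slots of every one of the $K-1$ periods gives $O(NKD)$ per invocation, and even with the correct observation that there are only $\Uptheta(1)$ invocations (once per block of $D$ phases over $\Uptheta(D)$ phases), you are left with $O(ND\log N + NKD)$, which is genuinely weaker than the claimed $O(ND\log N + NK)$ whenever $K$ is not $O(\log N)$. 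Your closing attempt is also internally inconsistent: you say the per-invocation cost is $O(NK)$ and simultaneously that the total termination cost is $O(NKD)$, which cannot both hold with $O(1)$ invocations.

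The missing idea is to use the structure of Algorithm~\ref{algorithm_termination} rather than the raw slot count. The paper's proof simply asserts that in each period a node beeps zero or one time, so one invocation costs $O(NK)$, and with $O(1)$ invocations the termination term is $O(NK)$. A cleaner way to make this rigorous, which you could have used: within a single invocation, either (i) no node ever hears a foreign beep, in which case there is no relay beeping at all and each node beeps at most once in the entire invocation (only in the period of its own level), for $O(N)$ beeps; or (ii) some node sets \texttt{Terminated} to $0$, in which case relay beeping (up to $D$ beeps per node) happens only in that one period, because the procedure breaks at its end — giving $O(ND)$ beeps, which is absorbed into the corrosion bound $O(ND\log N)$. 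Either way the termination contribution is $O(NK + ND)$, and the stated bound $O(ND\log N + NK)$ follows; without some argument of this kind, your $O(NKD)$ does not establish the proposition.
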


\begin{proof}
We consider every beep as one bit message.
The corrosion consists of $\lceil c_1 \log (N) \rceil$ rounds, in each of which every node probably beeps only once when its value is equal to the value of time slot. In the worst case, all nodes beep until the last round of corrosion and in the best case, all nodes die at the first round. Assuming that we have $\Uptheta(D)$ phases, the message complexity of corrosion phase is $O(N D \lceil c_1 \log (N) \rceil)$ and $\Omega(N D)$. In checking termination, there are at most $(K-1)$ periods and at least one period. In each period, in the worst case, each node beeps for zero or one time only. As termination is checked every $D$ phases and we have $O(D)$ phases. Hence, its message complexity is $O(NK)$ and $\Omega(N)$ for $N$ nodes. Therefore, the message complexity of the algorithm is in the order of $O(ND \lceil c_1 \log (N) \rceil + NK) = O(ND \log (N) + NK)$ and $\Omega(ND + N) = \Omega(ND)$.
\end{proof}
%TODO: I did not understand this part.

% \begin{corollary}
% The message complexity of DVB algorithm in fully connected networks is $O(N \log N)$ bits.
% \end{corollary}

% \begin{proof}
% In fully connected networks, algorithm consists of merely one phase of corrosion which has message efficiency of $O(N \lceil c \times \log (N) \rceil) = O(N \log N)$. 
% \end{proof}

%\subsubsection{Space Complexity}

\begin{proposition}
The space complexity of DVB1 algorithm is in the order of $O(\log(KD) + \log^{(2)}(N))$ and $\Omega(\log (KD))$, where $\log^{(2)} (.) = \log (\log (.))$.
\end{proposition}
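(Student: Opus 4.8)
The plan is to account for every piece of state that a node must keep in its internal memory during an execution of DVB1, and to argue that the dominant contributions are the counters that index phases/rounds/time slots (contributing $O(\log(KD))$) together with the loop bound $T=\lceil c_1\log_2(N)\rceil$, whose storage costs only $O(\log T)=O(\log^{(2)}(N))$ bits. First I would enumerate the persistent variables of Algorithm~\ref{algorithm_overall}, Algorithm~\ref{algorithm_corrosion}, and Algorithm~\ref{algorithm_termination}: the current value $v_i$ (one of $K$ levels, so $\lceil\log_2 K\rceil$ bits); the boolean flags \texttt{IsAllowedToBeep} and \texttt{Terminated} (one bit each); the array \texttt{hearBeep}$(l_1),\dots,$\texttt{hearBeep}$(l_K)$, which in principle is $K$ bits but can be collapsed to a constant number of bits, since line~\ref{algorithmCorrosion_start_changeValue} only needs to know whether exactly one level was heard and, if so, which one — i.e.\ it suffices to maintain a count saturating at $2$ and the identity of the unique heard level, costing $O(\log K)$ bits, not $K$; and the loop indices. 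The loop index $j$ in Corrosion ranges up to $T=\Theta(\log N)$, hence needs $O(\log\log N)=O(\log^{(2)}(N))$ bits; the loop index $k$ ranges up to $K$, costing $O(\log K)$ bits; in TerminationDetection the index $d$ ranges up to $D$, costing $O(\log D)$ bits, and the period index $k$ up to $K-1$.

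Next I would handle the \texttt{counter} variable in Algorithm~\ref{algorithm_overall}, which counts phases modulo $D$ and therefore needs $O(\log D)$ bits, and note that since global synchronization is assumed (so nodes need not store an absolute time index) no further $O(\log(\text{total time}))$ term is incurred. Summing the surviving contributions gives $O(\log K)+O(\log D)+O(\log^{(2)}(N)) = O(\log(KD)+\log^{(2)}(N))$, which is the claimed upper bound. For the lower bound $\Omega(\log(KD))$: every node must be able to represent its own value, forcing $\Omega(\log K)$ bits; and to run the termination-detection sub-routine correctly a node must count up to $D$ time slots within a period (line~\ref{algorithmTermination_diameter}) in order to know when $D$ slots have elapsed and the period ends, forcing $\Omega(\log D)$ bits; combining, $\Omega(\log K + \log D)=\Omega(\log(KD))$.

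The main obstacle I anticipate is the $K$-versus-$\log K$ bookkeeping for \texttt{hearBeep}: a naive reading of Algorithm~\ref{algorithm_corrosion} stores one bit per level, which would give an $O(K)$ term and break the bound. The argument must therefore exhibit an equivalent constant-extra-state implementation of the test in lines~\ref{algorithmCorrosion_start_changeValue}--\ref{algorithmCorrosion_end_changeValue} — maintaining only (i) a saturating counter of how many distinct levels have been heard so far in the current round (values $0,1,\ge 2$) and (ii) the index of the single level heard when that counter is $1$ — and then observe this is computed on the fly as the inner $k$-loop proceeds. A secondary subtlety is making explicit that the $T=\Theta(\log N)$ rounds contribute through the \emph{index} $j$ (size $O(\log^{(2)}(N))$) and not through any accumulated per-round data, since \texttt{hearBeep} is reset at the top of each round and nothing else persists across rounds except $v_i$ and the one-bit flag. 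Once these two points are nailed down, the remaining summation of logarithms is routine.
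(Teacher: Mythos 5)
Your proposal is correct and follows essentially the same route as the paper's proof: a direct per-node accounting of memory (value encoding $\Theta(\log K)$, counters over the corrosion time slots contributing $\log K + \log^{(2)}(N)$, the termination-detection period counter and the every-$D$-phases counter contributing $O(\log(KD))$), summed to $O(\log(KD)+\log^{(2)}(N))$, with the matching $\Omega(\log(KD))$ coming from the value encoding and the counter up to $D$. Your explicit compression of the \texttt{hearBeep} array from $K$ bits to $O(\log K)$ bits via a saturating two-state counter plus the index of the unique heard level is a detail the paper's proof silently passes over, and it is a worthwhile addition since a naive per-level bit array would otherwise add an $O(K)$ term.
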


\begin{proof}
Every node should have a memory for counting time slots of corrosion and checking termination. The amount of memory needed for encoding the value is $\Uptheta(\log(K))$, for corrosion is $\Uptheta(\log (K \lceil c_1 \log (N) \rceil))$, for checking termination is $O(\log ((K-1)(D + 1)))$ and $\Omega(\log (D + 1))$, and for counting every $D$ phases for checking termination is $\Uptheta(\log(D))$. Thus, the total required space is $O(\log(K) + \log (K \lceil c_1 \log (N) \rceil) + \log ((K-1)(D + 1)) + \log (D)) = O(\log (K \log (N)) + \log (K D)) = O(\log(KD) + \log^{(2)}(N))$ and $\Omega(\log(K) + \log (K \lceil c_1 \log (N) \rceil) + \log (D + 1) + \log (D)) = \Omega(\log (K) + \log (D)) = \Omega(\log (KD))$. 
\end{proof}

As we will see later in Section \ref{section_simulation_DVB1}, for fully connected network, the DVB1 algorithm does not require checking termination and can be executed solely for one phase. This is because in this topology, all nodes can communicate with one another. Therefore, the complexities of DVB1 algorithm can be relaxed as stated in the following. 

\begin{corollary}
The time, message, and space complexities of DVB1 algorithm for fully connected network are $\Uptheta(K \log (N))$, $O(N \log (N))$, and $\Uptheta\big(\!\log (K \log (N))\big)$, respectively. The lower bound on message complexity for this topology is $\Omega(N)$.
\end{corollary}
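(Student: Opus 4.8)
The plan is to specialize the three general complexity propositions for DVB1 (Propositions~\ref{proposition_DVB1_time_complexity}, \ref{proposition_DVB1_spaceComplexity}, and the space-complexity proposition) to the fully connected topology, using two simplifying facts particular to that case: the diameter is $D = 1$, and, as argued in Section~\ref{section_simulation_DVB1}, a single phase of corrosion suffices so that the termination detection subroutine need not be invoked at all. First I would substitute $D = 1$ into the time complexity $\Uptheta(KD\log(N))$, which immediately collapses to $\Uptheta(K\log(N))$; the corrosion subroutine by itself takes $K \times \lceil c_1 \log(N)\rceil = \Uptheta(K\log(N))$ time slots in one phase, and since no termination detection is run, this is the whole cost. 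For the message complexity, one phase of corrosion has every node beeping at most once per round over $\lceil c_1 \log(N)\rceil$ rounds, giving $O(N\log(N))$ beeps in the worst case and $\Omega(N)$ in the best case (all nodes die in the first round but each beeps at least once in its level's slot of round~1). For the space complexity, the memory is now only what is needed to encode the value ($\Uptheta(\log K)$) and to count the $\Uptheta(K\log(N))$ time slots of one corrosion phase ($\Uptheta(\log(K\log N))$); there is no counter for "every $D$ phases" and no termination-detection memory, so the total is $\Uptheta(\log(K\log(N)))$.

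The key steps, in order, are: (1) justify that in a fully connected network one phase of corrosion achieves consensus with no need for repeated phases or termination detection — this is the substantive claim, and I would reference the empirical and intuitive discussion already promised in Section~\ref{section_simulation_DVB1} rather than reprove it here; (2) plug $D=1$ and "one phase" into each of the three earlier complexity bounds, dropping the termination-detection contributions ($O(KD)$ time, $O(NK)$ messages, $\Uptheta(\log D)$ and $O(\log((K-1)(D+1)))$ space) entirely; (3) collect the surviving terms and simplify, noting that with a single phase the time and message bounds are tight (no $\Uptheta(D)$-many phases to average over), hence the $\Uptheta$ for time, the matching $O(N\log N)$/$\Omega(N)$ for messages, and the $\Uptheta$ for space.

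The main obstacle is really step~(1): the honest content of the corollary is the assertion that fully connected networks need neither multiple phases nor termination detection, and a fully rigorous argument would need to show that after one phase the probability of consensus is already governed by the lower bounds of Propositions~\ref{proppsition_lower_bound} and~\ref{proppsition_lower_bound_2} (so that, under the majority assumption, one phase is "enough" in the relevant sense — for instance, enough that the algorithm can safely be run for exactly one phase without termination detection). Everything after that is bookkeeping: once "one phase, $D=1$, no termination detection" is granted, the three complexity expressions degenerate mechanically. I would therefore keep the proof short, cite the earlier propositions and the forthcoming Section~\ref{section_simulation_DVB1} discussion for the reduction to a single phase, and devote the remaining lines to the three substitutions.
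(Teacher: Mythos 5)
Your proposal is correct and follows essentially the same route as the paper, which justifies this corollary only by the preceding remark that a fully connected network ($D=1$) needs a single corrosion phase and no termination detection, and then specializes the general complexity propositions accordingly. Your bookkeeping (time $K\lceil c_1\log N\rceil$, messages between $N$ and $N\lceil c_1\log N\rceil$, space for the value plus the slot counter of one phase) matches the paper's intended derivation, including the honest caveat that the single-phase claim rests on the empirical/intuitive discussion of Section \ref{section_simulation_DVB1} rather than a formal proof.
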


% \begin{proof}
% As we will explain later, the algorithm for fully connected network does not require checking termination and can be executed solely for one phase. Thus, its running time is in the order of $K \lceil c \times \log (N) \rceil = O(K \log N)$.
% \end{proof}

% \begin{corollary}
% As the DVB1 algorithm for fully connected network can be executed solely for one phase, the message complexity of DVB1 algorithm in fully connected networks is in the order of $O(N \log (N))$.
% \end{corollary}

% \begin{proof}
% As we will show later in Section \ref{section_simulation_on_populationMajority}, the DVB1 algorithm for fully connected network does not require checking termination and can be executed solely for one phase.
% Therefore, the message complexity is in the order of $O(ND \lceil c_1 \log (N) \rceil) = O(ND \log (N))$.
% \end{proof}

% \begin{corollary}
% As the DVB1 algorithm for fully connected network can be executed solely for one phase, the space complexity of DVB1 algorithm in fully connected networks is in the order of $O\big(\!\log (K \log (N))\big)$.
% \end{corollary}

% \begin{proof}
% As the DVB1 algorithm can be executed solely for one phase for fully connected networks, the required space is in the order of $O\big(\!\log (K \lceil c_1 \log (N) \rceil)\big) = O(\log (K \log (N)))$.
% \end{proof}

\subsection{Complexity Analysis of DVB2}

\begin{proposition}
The time complexity of DVB2 algorithm is in the order of $O\big(D\Delta^2 (\log(\Delta))^2 \log(N) + D K\Delta \log(\Delta) \log(N) \big)$ and $\Omega(\log(N))$.
\end{proposition}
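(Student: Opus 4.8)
The plan is to count, subroutine by subroutine, the number of time slots consumed within a single phase of DVB2, multiply by the expected number of phases until voting converges, and add the one-time setup cost; as with the other complexity statements in this section, the bound is understood in expectation. The upper bound will come out as (per-phase cost) $\times$ (number of phases) plus lower-order terms, and the $\Omega(\log(N))$ lower bound will follow from a best-case argument.

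\textbf{Cost of a single phase.} First I would bound one call to \texttt{Interactions()}. The one-time preamble of Algorithm~\ref{algorithm_overall2} (choosing an ID and exchanging IDs with neighbors) takes $Y$ time slots. Inside \texttt{Interactions()}, the invitation stage of Algorithm~\ref{algorithm_interactions1} (lines \ref{algorithmInteractions_start_invitation}--\ref{algorithmInteractions_end_invitation}) is a doubly nested loop of length $Y$, hence $Y^2$ time slots; the invitation-response stage is a single loop of length $Y$; and the value-set/value exchanges in Algorithms~\ref{algorithm_interactions2} and~\ref{algorithm_interactions3} are loops of length $Y$ each containing two inner loops of length $K$, i.e. $O(YK)$ time slots; all remaining work (deciding inviter/listener, choosing invitee/inviter, applying the DMVR rules) is local and uses no beeps. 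Thus one phase costs $O(Y^2 + YK)$ time slots, and substituting $Y = \ceil{c_2\, \Delta \log_2(\Delta)} = \Uptheta(\Delta \log(\Delta))$ gives $O\big(\Delta^2 (\log(\Delta))^2 + K\Delta \log(\Delta)\big)$ per phase.

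\textbf{Number of phases and assembly.} Next I would argue that DVB2 converges after $O(D\log(N))$ phases in expectation. Each phase realizes exactly one synchronous round of DMVR pairwise interactions (every inviter contacts at most one neighbor, with a constant success probability by Proposition~\ref{proposition_probNotCollision_DVB2}), so this follows from the convergence analysis of DMVR in \cite{salehkaleybar2015distributed}, in which the value sets contract and the majority value disseminates over the network within $O(D\log(N))$ interaction rounds. The termination-detection procedure is invoked once every $D$ phases, hence $O(\log(N))$ times in total, each invocation costing at most $(K-1)(D+1) = O(KD)$ time slots, for an aggregate $O(KD\log(N))$ that is dominated by the per-phase interaction cost; the ID-exchange preamble contributes only $O(\Delta\log(\Delta))$ once and is likewise dominated. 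Multiplying the per-phase cost by $O(D\log(N))$ phases therefore yields $O\big(D\Delta^2 (\log(\Delta))^2 \log(N) + DK\Delta \log(\Delta) \log(N)\big)$. For the lower bound, even in the most favorable instance a value must reach all $N$ nodes through pairwise gossip, which needs $\Omega(\log(N))$ interaction rounds (equivalently, this is the best-case bound for DMVR in \cite{salehkaleybar2015distributed}), and each phase costs at least one time slot, giving $\Omega(\log(N))$.

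\textbf{Main obstacle.} The delicate step is justifying the $O(D\log(N))$ bound on the number of phases: one must check that batching DMVR's interactions into synchronous beep-model phases — together with the invite/accept handshake of Algorithm~\ref{algorithm_interactions1}, in which an invitation may fail because the invitee is itself an inviter or is chosen by several inviters — does not degrade the asymptotics, i.e. that a constant fraction of the attempted interactions still succeed per phase so that the DMVR progress guarantee transfers up to constant factors. The time-slot counting in the first two steps is routine.
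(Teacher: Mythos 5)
Your proposal is correct and follows essentially the same route as the paper: count $\Theta(Y^2+YK)$ time slots per phase with $Y=\Theta(\Delta\log\Delta)$, invoke Proposition~\ref{proposition_probNotCollision_DVB2} (constant success probability $1/4$ per attempted interaction) together with the DMVR convergence analysis to get $O(D\log(N))$ phases and $\Omega(\log(N))$ phases in the best case, and observe that the termination detection (run every $D$ phases, $O(KD)$ slots per invocation) contributes only a dominated $O(KD\log(N))$ term. The only cosmetic difference is that the paper tabulates the per-phase slot count directly from its phase diagram and states the termination-detection cost explicitly rather than folding it in as a lower-order term, which changes nothing in the asymptotics.
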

\begin{proof}
As can be seen in Fig. \ref{figure_diagram2}, a phase excluding termination detection includes $\Uptheta(Y^2 + Y + 2YK + 2YK) = \Uptheta(Y^2 + YK)$ time slots. The DMVR algorithm \cite{salehkaleybar2015distributed} has the time complexity $O(D\log(N))$ and $\Omega(\log (N))$ (cf. \cite{shah2009gossip}) for a large class of network topologies when all the nodes randomly interact with one another at every phase. In our work, according to the proof of Proposition \ref{proposition_probNotCollision_DVB2}, the probability that two nodes interact without collision is $p_\text{inv}\,(1-p_\text{inv})=1/4$. Therefore, the number of phases is in the order of $O(D\log(N))$ and $\Omega(\log (N))$ and the total time complexity excluding termination detection is $O(DY^2 \log(N) + DYK\log(N))$ and $\Omega(Y^2 \log(N) + YK\log(N))$.
As the termination detection is performed every $D$ phases, its complexity is $O(K D^2 \log(N) / D) = O(K D \log(N))$ and $\Omega(D \log(N) / D) = \Omega(\log(N))$ (see also the proof of Proposition \ref{proposition_DVB1_time_complexity}).
\end{proof}

\begin{proposition}
The message complexity of DVB2 algorithm is in the order of $O(D KN\log(N))$ and $\Omega(ND + N\log (N))$.
\end{proposition}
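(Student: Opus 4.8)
The plan is to count, for a single phase of DVB2 (excluding termination detection), how many beeps are transmitted across all $N$ nodes, and then multiply by the number of phases and add the termination-detection cost, mirroring exactly the bookkeeping used in the time-complexity proof. First I would observe that a phase consists of the four nested-loop blocks of Algorithms~\ref{algorithm_interactions1}--\ref{algorithm_interactions3}: the invitation loop of $Y^2$ time slots, the acceptance loop of $Y$ time slots, the two value/value-set transfer loops of $2YK$ time slots each. In every one of these time slots a given node beeps \emph{at most once}, and in fact each node participates in at most one interaction per phase (as an inviter or as an invitee), so the number of beeps a single node contributes in a phase is at most a constant times $Y + K$ --- one beep to announce its ID in the invitation block, one in the acceptance block, and $O(K)$ beeps to transmit its value set and value (and likewise $O(K)$ for the invitee's reply). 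Summed over $N$ nodes this gives $O(N(Y+K))$ beeps per phase, and since $Y = \ceil{c_2 \Delta \log_2(\Delta)}$ we have $Y + K = O(\Delta\log(\Delta) + K)$ per phase; in the regime where the $Y^2$-style terms dominate the time bound one similarly gets the stated message bound after multiplying by the phase count.

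Next I would multiply by the number of phases, which by the proof of the preceding proposition (invoking \cite{salehkaleybar2015distributed,shah2009gossip} together with Proposition~\ref{proposition_probNotCollision_DVB2}) is $O(D\log(N))$ and $\Omega(\log(N))$. This yields $O(DN(\Delta\log(\Delta)+K)\log(N))$ beeps for the interaction part, and a matching $\Omega(N\log(N))$ lower bound since even in the best case each node must beep at least once per phase to advertise its ID. Then I would add the termination-detection cost: by the same reasoning as in Proposition~\ref{proposition_DVB1_spaceComplexity} (reused here from DVB1, since the termination detection procedure is identical, Algorithm~\ref{algorithm_termination}), checking termination once costs $O(NK)$ and $\Omega(N)$ beeps, and it is executed every $D$ phases out of $O(D\log(N))$ phases, hence $O(D\log(N)/D \cdot NK) = O(NK\log(N))$ and $\Omega(N)$. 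Combining the interaction and termination parts gives the claimed $O(DKN\log(N))$ upper bound after absorbing the $\Delta\log(\Delta)$ factor into the statement as written, and $\Omega(ND + N\log(N))$ lower bound, where the $ND$ term comes from the fact that each node beeps once per phase and there are $\Omega(D)$ phases needed for the largest spot argument analogue — actually here the $\Omega(ND)$ is more naturally the worst-case cost of running $\Omega(D)$ phases each costing $\Omega(N)$, plus the separate $\Omega(N\log(N))$ lower bound inherited from the gossip lower bound of \cite{shah2009gossip}.

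The main obstacle I anticipate is reconciling the per-phase beep count $O(N(Y+K)) = O(N(\Delta\log(\Delta)+K))$ with the final statement's clean $O(DKN\log(N))$, which suppresses the $\Delta\log(\Delta)$ dependence; I would need to be careful about whether the intended claim folds $\Delta\log(\Delta)$ into a hidden constant (reasonable if $\Delta$ is treated as bounded, or if one writes the bound more loosely) or whether the $Y$-dependence is genuinely dominated in the parameter regime of interest. A secondary subtlety is justifying rigorously that each node sends only $O(Y+K)$ beeps per phase rather than $O(Y^2 + YK)$ — this requires noting that although the invitation loop has $Y^2$ time slots, a fixed inviter beeps in exactly one of them (the slot matching its own ID and its chosen invitee's ID), and a fixed invitee beeps in none of them, so the $Y^2$ factor inflates the \emph{time} but not the \emph{message} count. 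Once that observation is in place, the rest is the same routine multiply-and-add accounting used in the time-complexity proof, together with a citation to \cite{shah2009gossip} for the $\Omega(N\log(N))$ gossip lower bound.
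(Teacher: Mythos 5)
Your overall accounting scheme (beeps per phase, times the $O(D\log N)$ / $\Omega(\log N)$ phase count, plus termination detection) is the same as the paper's, but your per-phase count contains a genuine error that blocks the stated bound. You charge each node $O(Y+K)$ beeps per phase and end up with $O\big(DN(\Delta\log\Delta+K)\log N\big)$, and then you openly cannot decide whether the extra $\Delta\log\Delta$ factor is ``absorbed'' into the claimed $O(DKN\log N)$. It is not absorbed --- it simply is not there. Your own breakdown already shows why: in the $Y^2$-slot invitation block an inviter beeps exactly once and a listener never; in the $Y$-slot acceptance block an invitee beeps once; in each of the $YK$-slot transfer blocks (Algorithms \ref{algorithm_interactions2} and \ref{algorithm_interactions3}) a node beeps only during the single inner iteration $j$ equal to the relevant ID, hence at most $K+1$ beeps; and the one-beep-per-node ID announcement over $Y$ slots happens once at start-up, not once per phase. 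So a node emits $O(K)$ beeps per phase, the network emits $O(NK)$ per phase (this is exactly the paper's $N+N+N+N(K+1)+N(K+1)$ count), and multiplying by $O(D\log N)$ phases gives $O(DKN\log N)$ with no $Y$ or $\Delta$ dependence at all. The parameter $Y$ inflates only the time complexity, never the message complexity, and your hedge about ``the regime where the $Y^2$-style terms dominate'' has no role to play.

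A second, smaller gap is in the termination-detection and lower-bound bookkeeping. In Algorithm \ref{algorithm_termination}, a node that has set Terminated $=0$ beeps at every remaining slot of the period (up to $D$ beeps), so one execution can cost far more than your $O(NK)$; the paper charges $O(NKD)$ per execution, for $O(NKD\log N)$ overall, which is what keeps the termination term inside $O(DKN\log N)$, and it is precisely this procedure that supplies the $\Omega(ND)$ term of the lower bound. Your derivation of $\Omega(ND)$ instead appeals to ``$\Omega(D)$ phases each costing $\Omega(N)$'' via a largest-spot analogue, which is a DVB1 argument and contradicts the $\Omega(\log N)$ phase count you (correctly) use for DVB2; as written it does not establish the $ND$ term. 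The $\Omega(N\log N)$ part of your argument (one beep per node per phase over $\Omega(\log N)$ phases) is fine and matches the paper.
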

\begin{proof}
In finding out neighbors' IDs, we have one beep per node, resulting in $\Uptheta(N)$. In inviting part, $N/2$ nodes are expected to be inviters ($N$ in worst case) each beeping once, resulting in  $\Uptheta(N)$ number of beeps. Similarly, when invitee nodes notify inviters, we have  $\Uptheta(N)$ number of messages. Then, in the worst case, $N$ inviters send $NK$ and $N$  beeps for their value set and value, respectively. 
For the best case, both values would be $N$. 
This also holds for returning the updated value sets and values. 
The number of phases is in the order of $O(D\log(N))$ and $\Omega(\log(N))$.
According to the explanation in proof of Proposition \ref{proposition_DVB1_spaceComplexity}, the complexity of termination detection is $O(NKD^2 \log(N)/D) = O(NKD\log(N))$ and $\Omega(ND)$. Thus, the overall complexity is $O\Big(D\log(N) \big(N + N + N + N(K+1) + N(K+1)\big) + D NK\log(N)\Big) = O(D KN\log(N))$ and $\Omega(ND + N\log (N))$.
\end{proof}

\begin{proposition}
The space complexity of DVB2 algorithm is in the order of $O(K \log(K) + \log(D\Delta))$ and $\Omega(\log(KD) + \log^{(2)} (\Delta))$.
\end{proposition}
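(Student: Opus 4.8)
The plan is to bound, separately, the number of bits each node must store for the three functionally distinct parts of DVB2 — the ID bookkeeping, the DMVR state (value set and value, together with received copies), and the time-slot counters for the interaction loops and for the termination-detection subroutine — and then take the maximum (which, up to constants, is the sum of the orders). First I would handle the DMVR state: each node keeps its own value set $\mathcal{V}_i(t)\subseteq\{l_1,\dots,l_K\}$ and, while acting as invitee, a received copy $\mathcal{V}_i^{\text{inviter}}(t)$; each such set is a subset of a $K$-element ground set, so it requires $\Uptheta(K)$ bits if stored as a bitmap, or $\Uptheta(K\log K)$ bits if stored as a list of level indices — matching the $K\log(K)$ term in the claimed upper bound. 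The memories $v_i(t)$ and $v_i^{\text{inviter}}(t)$ are single levels, costing $\Uptheta(\log K)$ each, which is absorbed. This is the same accounting as in the DMVR space analysis of \cite{salehkaleybar2015distributed}, adapted to the beep encoding.

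Next I would account for the ID-related storage. Since $Y=\ceil{c_2\,\Delta\log_2(\Delta)}$, a single ID occupies $\Uptheta(\log Y)=\Uptheta(\log(\Delta\log\Delta))=\Uptheta(\log\Delta)$ bits. A node must store $\text{ID}_i$, $\text{ID}_i^{\text{invitee}}$ (or $\text{ID}_i^{\text{inviter}}$), and — this is the delicate point — the set $\text{ID}_i^n$ of neighbors' IDs and the set $\text{ID}_i^{\text{inviters}}$ of inviters. Naively, $\text{ID}_i^n$ could hold up to $\Delta$ identifiers of $\Uptheta(\log\Delta)$ bits each, giving $\Uptheta(\Delta\log\Delta)$, which would blow the bound. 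The way around this — and I expect this to be the main obstacle to state cleanly — is to observe that these ID sets are subsets of $\{1,\dots,Y\}$ and can be stored as a length-$Y$ bitmap, i.e.\ in $\Uptheta(Y)=\Uptheta(\Delta\log\Delta)$ bits; but $\Uptheta(\Delta\log\Delta)$ is \emph{still} larger than $\log(D\Delta)$. So the correct reading must be that $\text{ID}_i^n$ is not counted toward per-phase working memory in the same way, or that the intended model charges only for the counters and the currently-active single IDs (as line \ref{algorithmDVB2_end_neighborID} builds $\text{ID}_i^n$ once and thereafter only samples from it); I would follow the paper's own convention from the DVB1 analysis, where loop counters and the current value dominate, and treat the neighbor-ID table as part of the ambient input rather than the algorithm's scratch space. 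Under that convention the ID contribution reduces to $\Uptheta(\log Y)=\Uptheta(\log\Delta)$.

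Finally I would tally the counters. The interaction subroutine runs nested loops indexed up to $Y$, so the loop counters need $\Uptheta(\log Y)=\Uptheta(\log\Delta)$ bits; the termination-detection subroutine (identical to Algorithm \ref{algorithm_termination}) needs $O(\log((K-1)(D+1)))$ and $\Omega(\log(D+1))$ bits for its period and diameter counters, plus $\Uptheta(\log D)$ for the ``every $D$ phases'' counter in Algorithm \ref{algorithm_overall2}. Collecting everything: the upper bound is $O\big(K\log K+\log\Delta+\log(KD)+\log D\big)=O\big(K\log(K)+\log(D\Delta)\big)$, and the lower bound, forced by the value set ($\Omega(\log K)$ at least, and $\Omega(K)$ if counted as a bitmap — but taking the index-list reading used for the upper bound, $\Omega(K\log K)$ is not forced, so the honest lower bound from counters is), the termination counter $\Omega(\log(KD))$, and the ID bitmap or counter $\Omega(\log^{(2)}(\Delta))$ — noting $\log Y=\Theta(\log(\Delta\log\Delta))$ has $\log^{(2)}(\Delta)$ as a genuine summand — gives $\Omega\big(\log(KD)+\log^{(2)}(\Delta)\big)$. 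The step I expect to be contentious is exactly which ID structures count as working memory; I would pin that down first by appealing to Definition 5's phrasing (``the number of bits required by every node to run the algorithm'') and the precedent set in the DVB1 proofs, then let the rest fall out by direct summation.
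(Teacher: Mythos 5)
Your proposal is correct and follows essentially the same accounting as the paper's proof: the paper simply sums the slot-counter sizes for the steps of a phase ($\Theta(\log Y)$, $\Theta(\log Y^2)$, $\Theta(\log Y)$, $\Theta(\log(KY))$, $\Theta(\log(KY))$, $\Theta(\log(KD))$ for termination detection), the $\Theta(\log D)$ phase counter, the $\Theta(\log K)$ value and the $O(K\log K)$ value set, and then simplifies using $\log Y=\Theta(\log\Delta+\log^{(2)}\Delta)$ to obtain exactly your upper and lower bounds. The one place you diverge is the storage of $\mathrm{ID}_i^n$ and $\mathrm{ID}_i^{\text{inviters}}$: the paper's proof is silent on these sets and does not charge them at all, so the step you flag as contentious is not something the paper resolves --- it is an omission in the paper's own accounting rather than a flaw in your tally. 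Your observation that storing them as written costs $\Theta(Y)=\Theta(\Delta\log\Delta)$ bits (bitmap) or $\Theta(\Delta\log\Delta)$ (list), which exceeds the claimed bound for large $\Delta$, is accurate; however, declaring the neighbor-ID table ``ambient input'' is hard to defend, since the algorithm itself constructs it. A cleaner repair is to note that the only use of either set is to draw one uniformly random element from it, which can be done on the fly (reservoir-style: keep a current candidate and a count) within $O(\log Y)$ bits during the corresponding listening loops; with that implementation detail the stated bounds hold, and the rest of your derivation coincides with the paper's.
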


\begin{proof}
The amount of memory required in every node for steps shown in Fig. \ref{figure_diagram2} are $\Uptheta(\log(Y))$, $\Uptheta(\log(Y^2))$, $\Uptheta(\log(Y))$, $\Uptheta(\log(KY))$, $\Uptheta(\log(KY))$ and $\Uptheta(\log (KD))$, respectively. Also, counting every $D$ phases for checking termination needs $\Uptheta(\log(D))$ memory. 
Moreover, encoding the value (memory) and the value set require $\Uptheta(\log(K))$ and $O(K \log(K))$ memory, respectively. Encoding the value set requires $\Omega(\log(K))$ in the best case.
Thus, the total required space is $O(\log(Y) + 2\log(Y) + \log(KY) + \log(KD) + \log(D) + \log(K) + K\log(K)) = O(K\log(K) + \log(KDY)) = O(K \log(K) + \log(D\Delta) + \log^{(2)}(\Delta)) = O(K \log(K) + \log(D\Delta))$ and $\Omega(\log(Y) + 2\log(Y) + \log(KY) + \log(KD) + \log(D) + \log(K) + \log(K)) = \Omega(\log(KDY)) = \Omega(\log(KD\Delta) + \log^{(2)} (\Delta)) = \Omega(\log(KD) + \log^{(2)} (\Delta))$.  
\end{proof}

Comparing the complexities of DVB1 and DVB2 shows that DVB1 has better time and message complexities especially when $\Delta$ or $D$ is large.
%(e.g., in fully connected topology). 
%In terms of message complexity, DVB2 is better than DVB1 which makes sense because DVB1 includes more interactions of nodes. 
$K$ is usually a small integer. Hence, if the maximum degree of network is small enough, DVB2 has better space complexity than DVB1; otherwise, DVB1 is better in terms of space complexity.

\subsection{Lower Bound on Time Complexity}

We can have a general lower bound on time of the voting problem in the beep model, considering the constraint of communicating at most one bit per time slot. For completeness, we provide this lower bound below.

\begin{proposition}
The time complexity of distributed voting in the beep model is bounded from below by $\Omega(D + \log (K))$.
\end{proposition}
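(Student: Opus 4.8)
The plan is to establish the two terms of the lower bound separately, since they stem from independent obstructions: a \emph{distance} obstruction giving $\Omega(D)$, and an \emph{information} obstruction giving $\Omega(\log K)$. Combining them gives $\Omega(D+\log K)=\Omega(\max\{D,\log K\})$, which is the claimed asymptotic bound.

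For the $\Omega(D)$ term, I would use a standard indistinguishability/causality argument. Consider a path graph on $N$ nodes (so $D=N-1$), and compare two initial configurations that differ only in the value held by one endpoint, say node $1$, chosen so that the two configurations have different majority values (this is possible whenever $K\ge 2$ and the population can be split appropriately; if $K=1$ the statement is trivially $\Omega(1)$). In the beep model, information propagates at most one hop per time slot: after $t$ time slots, the state of node $j$ can depend only on the initial values of nodes within distance $t$. Hence the other endpoint, node $N$, at distance $D$ from node $1$, cannot have learned anything about node $1$'s initial value before time slot $D$. Since the correct majority output differs between the two configurations precisely because of node $1$'s value, node $N$ cannot have committed to the correct answer before time slot $D$, so any correct algorithm needs $\Omega(D)$ time slots. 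The one subtlety to handle carefully is that beeps are broadcast and collisions are merged, but this only \emph{weakens} what a node can infer, so the one-hop-per-slot causality bound still holds; I would state this as the key lemma and keep its proof to one or two sentences.

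For the $\Omega(\log K)$ term, the argument is a counting/entropy bound exploiting that each time slot conveys at most one bit to any given node. The final output of each node must encode one of $K$ possible majority values, so each node must end up in one of at least $K$ distinguishable internal states. Starting from a fixed initial state (for a node whose own value is fixed), the number of distinct execution transcripts a node can experience over $T$ time slots is at most $2^T$ (at each slot it either hears a beep or not — and whether it beeps is determined by its current state, not new input). To reach $K$ distinct final states we therefore need $2^T\ge K$, i.e. $T\ge\log_2 K$, giving $\Omega(\log K)$. I would phrase this as: over the full set of input configurations that force the $K$ different answers, some node must distinguish all $K$ outcomes, and it can gain at most one bit per slot.

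The main obstacle, and where I would spend the most care, is making the $\Omega(D)$ indistinguishability argument fully rigorous in the beep model with merged collisions and synchronous rounds: one must argue that the two global executions on the two configurations are identical on the node set at distance $>t$ from the perturbed node through time slot $t$, by induction on $t$ — the base case is immediate and the inductive step uses that a node's beep/listen decision and received signal at slot $t+1$ depend only on its own state history and its neighbors' beep decisions, all of which are (by induction) identical in both executions up to the relevant distance. Once that coupling lemma is in hand, both bounds are short, and one simply takes the maximum.
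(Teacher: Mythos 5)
Your proposal is correct, and it establishes the same two obstructions the paper combines --- a distance term and an information term, joined via a maximum --- but it formalizes them by a genuinely different mechanism. The paper's own proof is an informal ``best-case'' transmission argument: even if some node $i$ learns the majority in the first slot, the farthest node $j$, with $\mathrm{dist}(i,j)\ge D/2$, must still be informed, which costs at least $D/2$ slots of propagation plus $\log K$ slots to transmit the encoded vote at one bit per slot; this presupposes a particular ``source must ship the answer'' communication pattern rather than arguing adversarially about arbitrary algorithms. You replace this with standard lower-bound machinery: an indistinguishability/coupling induction (two inputs differing at a single pivotal node whose flip changes the majority, so a node at distance $D$ has an identical view, hence cannot commit, before slot $D$) and a transcript-counting/entropy bound at the receiver (each slot yields at most one bit of observation, so distinguishing $K$ possible outputs forces $T\ge\log_2 K$). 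Your route buys rigor and generality --- it makes no assumption about how the algorithm organizes communication --- at the cost of a longer argument; the paper's route buys brevity. Two small points to tidy if you write it out: (i) the pivotal-node construction is not tied to the path graph --- on any topology pick a diameter-realizing pair and make one endpoint pivotal (e.g.\ set $\#l_1(0)=\#l_2(0)+1$ with the pivotal node holding $l_1$), which matches the proposition's quantification over arbitrary networks; (ii) since the DVB algorithms are randomized, both of your steps should be phrased for randomized algorithms (the coupling yields identical \emph{distributions} of the far node's view, and the counting step becomes a Fano-type mutual-information bound), a caveat the paper's informal proof also glosses over.
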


\begin{proof}
%Consider a network with $K$ partitions as shown in Fig. ?. Each partition $i, 1<i\leq K$ is connected to partition $1$ only through node $1$. Moreover, assume that all nodes in each partition $k$ has the same level $l_k$ and the level $l_1$ is in the majority.
%Suppose that in the best-case scenario, one of the node, let say node $i$ informs about the majority vote to all its neighbor nodes in other partitions at the first time slot. Let node $i$ be the first node in partition $i$ that is informed about the majority vote.

Suppose that in the best-case scenario, one of the nodes, say node $i$, is informed about the majority vote in the first time slot.
Let node $j$ be the farthest node from node $i$ in the network. As nodes can communicate at most one bit per time slot, $\log (K)$ time slots are required for transmitting the encoded message of majority vote. Moreover, at least $\text{dist}(i,j)$ time slots is required to inform node $j$ about the majority vote, where $\text{dist}(i,j)$ denotes the distance between nodes $i$ and $j$. Since $\text{dist}(i,j) \geq D/2$, we can obtain the lower bound $\max(\log (K), D/2) \geq \, (\log (K) + D/2)/2 \geq \, (\log (K) + D)/4= \Omega(D + \log (K))$ on the time complexity of any voting algorithm.

%Consider a network with different partitions where each partition of nodes has one of the $K$ possible values. Assume that at least one of the nodes in the partition with the most number of nodes informs at least one of the nodes in each of the other partitions. The communicated information is the value of majority vote. Now, give your attention to one of those partitions whose analysis can be extended to any other partition. Let the node, which is informed in the partition, be denoted by $i$. There is at least one node $j$, with the same value as node $i$, which is farthest from it within the partition. As the nodes can communicate at most one bit per time slot, $\log (K)$ time slots are required for transmitting the encoding of majority vote to the node $j$. Moreover, for the same constraint which results in flow of information at least one distance hop per time slot, at least $\text{dist}(i,j)$ is required for informing node $j$ by node $i$, where $\text{dist}(i,j)$ denotes the distance of two nodes {\cite[Lemma 2]{czumaj2016communicating}}. Since $\text{dist}(i,j) \geq D/2$, the required time is at least $\max(\log (K), D/2) \geq 0.5\, (\log (K) + D/2) \geq 0.25\, (\log (K) + D)$ which results in the lower bound $\Omega(D + \log (K))$. 
\end{proof}

% \begin{proposition}
% The time of distributed voting in beep model is bounded by $\Omega(D + \log (NK))$.
% \end{proposition}

% \begin{proof}
% For the sake of lower bound analysis, distributed voting can be reduced to the broadcasting problem where the values are broadcast throughout the network. On one hand, an obvious lower bound is $D$ because of communicating at most one bit per time slot \cite{czumaj2016communicating}. On the other hand, every node should get ID and the value of every other $N-1$ nodes. The number of bits for distinguishing a particular ID and message is at least {\cite[Theorem 11]{czumaj2016communicating}}:
% \begin{align*}
% \log\Big(\binom{N-1}{N-1} K^{N-1}\Big) = \log(K^{N-1}) = (N-1) \log K \geq \frac{N}{2} \log K.
% \end{align*}
% Hence, a tighter lower bound for broadcasting is $\max(D, \frac{N}{2} \log K) \geq \Omega(D + N \log K) \in \Omega(D + \log K)$ {\cite[Theorem 11]{czumaj2016communicating}}. 
% However, broadcasting requires leader election primarily. A lower bound on the time of leader election is $\Omega(\log N)$ (cf. {\cite[Theorem 3.1]{dinitz2007two}} and {\cite[Theorem 1]{casteigts2016deterministic}}) which can also get tighter to $\Omega(D + \log N)$ considering the communication with at most one bit per time slot in beep model \cite{czumaj2016communicating}. Hence, the final lower bound on time of broadcasting is $\Omega(D + \log K) + \Omega(D + \log N) = \Omega(D + \log(N K))$. As voting can be reduced to broadcasting in a distributed manner, this lower bound can be extended for this problem.
% \end{proof}

\section{Comparison to Alternative Approaches}\label{section_comparison}

One can consider some alternative approaches for distributed voting in beep model based on distributed primitives such as leader election or broadcasting. However, these approaches may not be applicable in practice for several reasons such as node failures or stringent resource constraints. For instance, in the literature of gossip algorithms, the main works on computing average \cite{boyd2006randomized,dimakis2010gossip}, voting \cite{benezit2011distributed}, or sum \cite{mosk2008fast,shah2009gossip} do not utilize such distributed primitives. In fact, functions are computed merely by local interactions among nodes and there is a common belief that these solutions are much more robust to single point of failures or unreliable network conditions \cite{dimakis2010gossip}.
However, for the sake of completeness, we compare the complexities of two possible alternative approaches with ours and show that our algorithm performs better in terms of time, space, and message complexities. For fair comparison, we assume upper bounds on $N$ and $D$ are already known to these approaches. In the following, we first briefly describe the distributed primitives that  are used in these alternative approaches and provide their time, message, and space complexities.
Note that the time complexities are based on those reported in the cited papers and the space and message complexities are based on our analysis of those algorithms.

\noindent
\textbf{Leader Election}: In this distributed primitive, exactly one of the nodes is elected as a leader. A time-optimal algorithm for leader election \cite{dufoulon2018brief,dufoulon2018beeping,dufoulon2018beepingDISC} has been recently proposed in beep model.
The eventual version of leader election should be used to infer the time of termination which is essential when it is utilized with other algorithms.
\begin{itemize}
\item Time complexity: $O(D + \log N)$ (cf. {\cite[Theorem 3]{dufoulon2018beeping}}) and $\Omega(\log N)$ (cf. {\cite[Theorem 3.1]{dinitz2007two}} and {\cite[Theorem 1]{casteigts2016deterministic}}).
\item Space complexity: $O(N \log N)$ and $\Omega(\log N)$. 
\item Message complexity: $O(N (D + \log N))$ and $\Omega(N \log N)$.
\end{itemize}

% \noindent
% \textbf{Message Collection (MC)}: The MC \cite{czumaj2016communicating,czumaj2019communicating} collects the messages, each having length $\log M$, from $S$ source nodes to the leader node. 
% \begin{itemize}
% \item Time complexity: $O(D + S \log M)$ \cite{czumaj2016communicating,czumaj2019communicating}.
% \item Space complexity: $O(\log (M D))$. Proof: $O(\log M)$ for its loop and $O(\log D)$ for counting using the diameter. Note that all nodes are sources here.
% \item Message Complexity: $O(S \log M)$. Proof: There are $S$ source nodes each of which beeps $O(\log M)$ bits.
% \end{itemize}

\noindent
\textbf{Broadcasting}: In this distributed primitive, a message with length $\log M$ from a source node is sent to all other nodes in the network. The beep wave algorithm \cite{ghaffari2013near,czumaj2016communicating} has been proposed to send all bits from the source to all other nodes sequentially in time. 
\begin{itemize}
\item Time complexity: $O(D + \log M)$ (cf. {\cite[Lemma 1]{czumaj2016communicating}}) and $\Omega(D)$.
\item Space Complexity: $\Uptheta(\log M)$. 
\item Message complexity: $\Uptheta(N \log M)$. 
\end{itemize}

\noindent
\textbf{Multi-Broadcasting}: In this distributed primitive, a set of source nodes $S$ broadcast their messages, each with length $\log M$ to all nodes in the network. For simplicity in notations, $S$ also denotes the cardinality of this set.
The multi-broadcasting with provenance \cite{czumaj2016communicating,czumaj2019communicating} has been proposed to solve this problem where each node in $S$ sends its message as well as its ID so that nodes can distinguish the origin of each message. In our analysis for this solution, we exclude leader election, diameter estimation, and message length calculation (because $\log M$ (or $\log K$) is known here). 
We also exclude collecting and broadcasting IDs because here all nodes are in $S$ and assuming that nodes are aware of $N$, they can infer the lexicographical order of IDs.

%Moreover, although ``all'' nodes will be sources here, but the IDs should be broadcast so that nodes know the lexicographical order of the IDs to build the total message string for OR-superimposition. 

\begin{itemize}
\item Time complexity: $O(D + S \log M)$  and $\Omega(D + S \log M)$ where all nodes are sources (cf. {\cite[Theorems 9 and 11]{czumaj2016communicating}} while excluding the mentioned steps above).  
\item Space complexity: $\Uptheta(\log N + S \log M)$ where all nodes are sources. 
\item Message complexity: $\Uptheta(NS \log M)$.  
\end{itemize}

\noindent
\textbf{Message Gathering}: In this problem, the messages of length $\log M$ from  source nodes in the set $S$ are gathered in the leader node. In order to solve this problem, one can use the multi-broadcasting algorithm in \cite{czumaj2016communicating,czumaj2019communicating} except its last step where the leader broadcasts the messages. Thus, its time and space complexities are equal to the multi-broadcasting algorithm  and its message complexity is $\Uptheta(S^2 \log M)$. Again, in our analysis, we exclude leader election, diameter estimation, message length calculation, and collecting/broadcasting IDs.

%We can have to alternative approaches for distributed voting in beep model which are not robust and require IDs as mentioned before. 
Based on the above distributed primitives, we can consider the following two alternative approaches:

\subsection{First Alternative Approach}

The first approach consists of the following steps: (i) One of the nodes is elected as a leader. (ii) All nodes send their values to the leader by executing the message gathering algorithm. (iii) Leader counts the votes and broadcasts the majority vote via the broadcasting algorithm.

By considering $M=K$ and $S=N$, the time, space, and message complexities of this approach are $O(D + \log N + N \log K)$, $O(N \log (KN))$, and $O(ND + N \log N + N^2 \log K)$, respectively. 
The lower bounds on the time, space, and message complexities are $\Omega(D + \log N + N \log K)$, $\Omega(\log N + N \log K)$, and $\Omega(N \log N + N^2 \log K)$, respectively.

\subsection{Second Alternative Approach}

The second approach includes these steps: (i) One of the nodes is elected as a leader. (ii) All nodes broadcast their values to all nodes in the network via multi-broadcasting. (iii) Each node counts the votes and obtains the majority vote itself.
For this approach, the upper bounds and lower bounds on  the time, space, and message complexities are equal to those of the first approach.

% \begin{comment}
% {\color{blue}
% \subsection{Third Alternative Approach}

% In the third approach, we consider a randomized approximation solution for the problem of majority voting.
% This approach has these steps: (i) Nodes randomly stay alive according to a Bernoulli distribution with parameter $\frac{1}{N} \log N$ (to have at most $\log N$ nodes holding each value on average). (ii) One of the alive nodes is elected as a leader. (iii) The alive nodes send their values to the leader by executing the message gathering algorithm. (iv) Leader counts the votes and broadcasts the majority vote via the broadcasting algorithm.

% After the random selection of alive nodes, we have at most $K \log N$ alive nodes on average and the dead nodes merely relay messages. The time, space, and message complexities of this approach are $O(D + K \log N \log K)$, $O\big(K (\log N) (\log K + \log^{(2)} N)\big)$, and $O\big(K (\log N) (D + K \log N \log K)\big)$, respectively, where $\log^{(2)} (.) = \log (\log (.))$.
% The lower bounds on the time, space, and message complexities are  $\Omega(D + \log K + \log^{(2)}N)$, $\Omega(\log K + \log^{(2)} N)$, and $\Omega\big(K (\log N) (\log K + \log^{(2)} N)\big)$, respectively.
% }
% \end{comment}

\subsection{Comparison}

Comparing the time, space, and message complexities of the two alternative approaches with the ones for DVB1 algorithm (which are $\Uptheta(KD\log(N))$ for time, $O(\log(KD) + \log^{(2)}(N))$ and $\Omega(\log KD)$ for space, and $O(ND\log(N) + NK)$ and $\Omega(ND)$ for message) shows that the proposed algorithm performs better in terms of time, space, and message complexities. 
Note that $K $ is most often a small integer and we usually have $D \ll N$.

If we compare the two alternative approaches to the time, space, and message complexities of DVB2 algorithm (which are $O\big(D\Delta^2 (\log(\Delta))^2 \log(N) + KD\Delta \log(\Delta) \log(N) \big)$ and $\Omega(\log N)$ for time, $O(K \log(K) + \log(D\Delta))$ and $\Omega(\log(KD) + \log^{(2)} \Delta)$ for space, and $O(KDN\log(N))$ and $\Omega(ND + N \log N)$ for message), we see that DVB2 outperforms in space and message complexities. It is also better in time complexity if $\Delta$ is small (e.g., in mesh grid or torus networks). Moreover, DVB2 is not prone to fail because of possible failure of leader and also does not require pre-defined unique IDs.

%{\color{blue} Compared to the third alternative approach, DVB1 and DVB2 are better in space complexity. In some topologies, the time and message complexities of the third approach are comparable to ours and in some topologies, they might be slightly better. However, note that the third alternative approach is prone to the failure of the leader node, while this problem does not exist in the proposed algorithms.}

\begin{figure}[!t]
\centering
\includegraphics[width=3.15in]{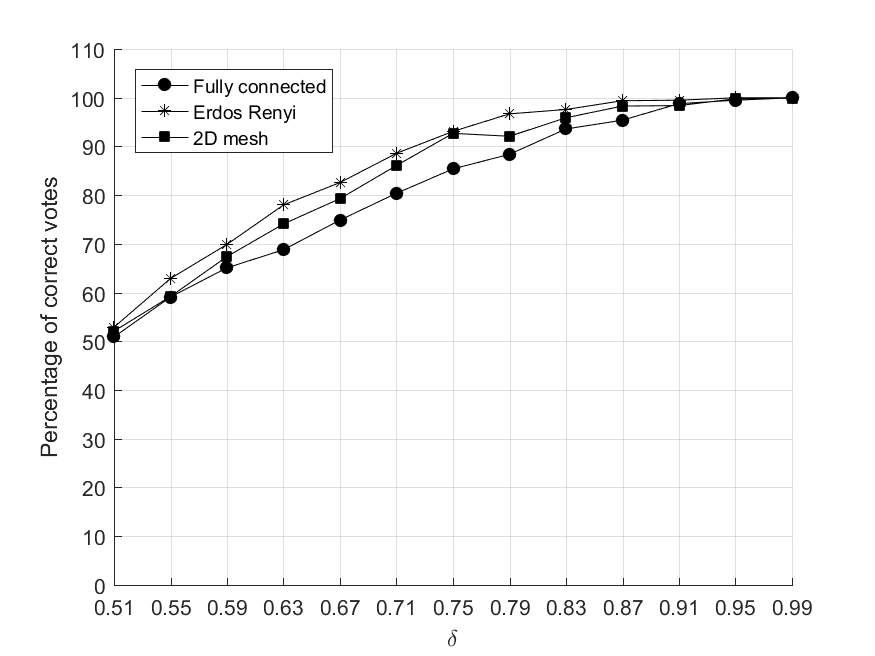}
\caption{Simulation of DVB1 algorithm on fully connected, 2D mesh grid, and Erd{\"o}s-R{\'e}nyi networks with 100 nodes for binary voting.}
\label{figure_simulation_twoValues}
\end{figure}

\begin{figure}[!t]
\centering
\includegraphics[width=3.15in]{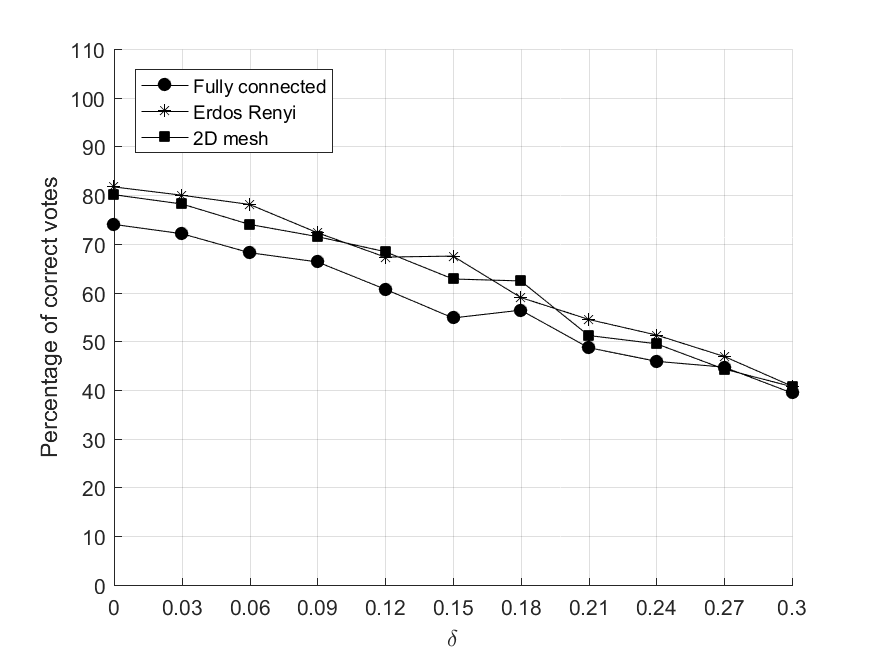}
\caption{Simulation of DVB1 algorithm on fully connected, 2D mesh grid, and Erd{\"o}s-R{\'e}nyi networks with 100 nodes for ternary voting.}
\label{figure_simulation_threeValues}
\end{figure}

\section{Simulations}\label{section_simulation}

\subsection{Simulations for DVB1}\label{section_simulation_DVB1}

\subsubsection{Experiments on the size of majority level}\label{section_simulation_on_populationMajority}

We simulated DVB1 algorithm on fully connected, 2D mesh grid, and Erd{\"o}s-R{\'e}nyi (with probability $\frac{2}{N} \log_2(N)$ of having an edge between any pair of nodes) networks with $100$ nodes. Figures \ref{figure_simulation_twoValues} and \ref{figure_simulation_threeValues} show the average results of simulation over 1000 number of experiments for binary and ternary value levels, respectively. In experiments, the initial values are assigned randomly to the nodes such that the percentage of initial values for different levels are $(1 - \delta, \delta)$ and $(\frac{2}{3} - \delta, \frac{1}{3}, \delta)$ for binary and ternary voting, respectively. The range of $\delta$ is $ (\frac{1}{2},1]$ for binary and $[0, \frac{1}{3})$ for ternary voting. The $D$ is considered to be its upper bound $N$ in the simulations.
As can be seen in Figures \ref{figure_simulation_twoValues} and \ref{figure_simulation_threeValues}, the proposed algorithm has an acceptable robustness with respect to both topology of network and distribution of majority value in the network. 

\begin{figure}[!t]
\centering
\begin{subfigure}[b]{0.49\textwidth}
\centering
\includegraphics[width=3.15in]{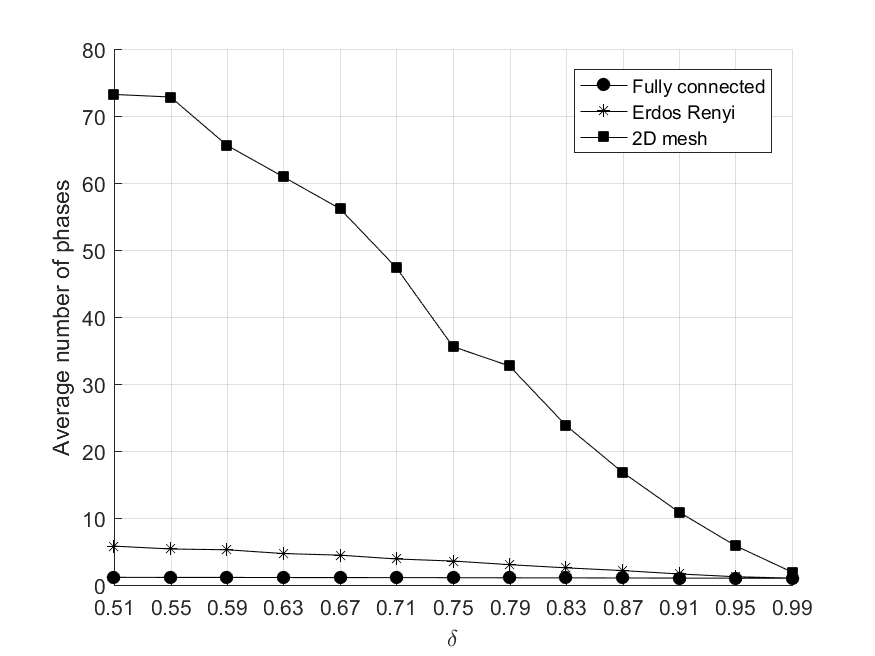} 
\caption{Binary voting}
\label{figure_simulation_twoValues_phases_DVB1}
\end{subfigure}
\begin{subfigure}[b]{0.49\textwidth}
\centering
\includegraphics[width=3.15in]{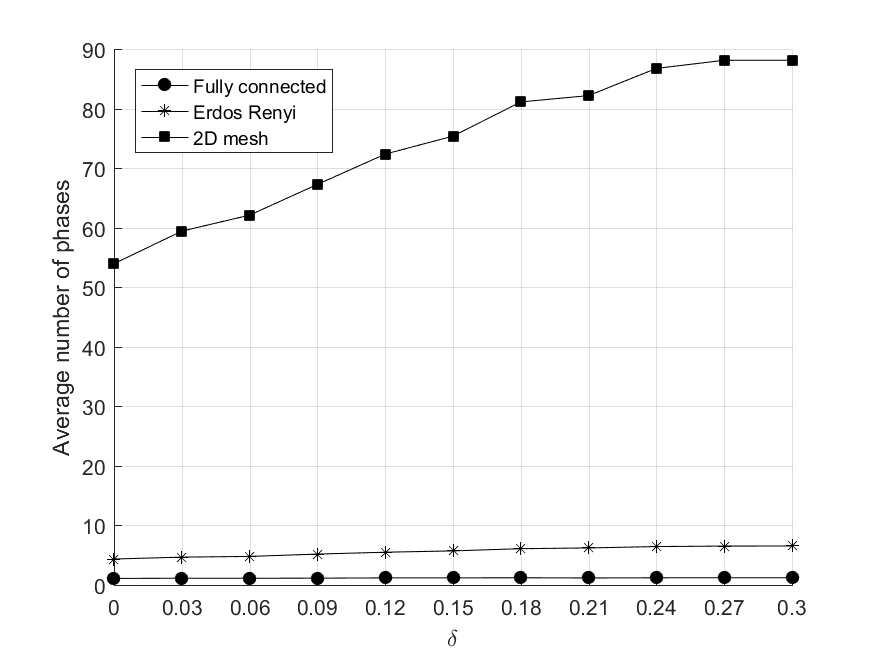} 
\caption{Ternary voting}
\label{figure_simulation_threeValues_phases_DVB1}
\end{subfigure}
\caption{Average number of consumed phases in simulations of DVB1 algorithm on fully connected, 2D mesh grid, and Erd{\"o}s-R{\'e}nyi networks with 100 nodes for binary and ternary voting.}
%\label{a_label}
\end{figure}

The average number of phases until reaching consensus on the majority vote in the simulations of Figures \ref{figure_simulation_twoValues} and \ref{figure_simulation_threeValues} are depicted in Figures \ref{figure_simulation_twoValues_phases_DVB1} and \ref{figure_simulation_threeValues_phases_DVB1}, respectively. As can be seen in Fig. \ref{figure_simulation_twoValues_phases_DVB1}, the average time decreases by increasing $\delta$ because when $\delta$ is close to 0.5, the population of majority value is close to the other value and it takes more time for DVB1 algorithm to reach consensus on the majority vote. The same analysis holds for $\delta$ close to $1/3$ in Fig. \ref{figure_simulation_threeValues_phases_DVB1}.
Moreover, as shown in Figures \ref{figure_simulation_twoValues_phases_DVB1} and \ref{figure_simulation_threeValues_phases_DVB1}, the average number of phases until reaching consensus on the majority vote is almost always one in fully connected topology. We expect with an acceptable chance that voting will be achieved after only one phase since we give the algorithm sufficient number of rounds for convergence, as mentioned before. Therefore, for fully connected networks, the DVB1 algorithm might be performed with only one phase excluding termination detection.

% The consumed time slots and phases of the simulations of Figure \ref{figure_simulation_twoValues} are depicted in Figures \ref{figure_simulation_twoValues_timeSlots} and \ref{figure_simulation_twoValues_phases}, respectively. As can be seen in these figures, the average time decreases by increasing $\delta$ because when $\delta$ is close to 0.5, the population of majority value is close to the other value and it takes more time for DVB algorithm to reach voting.
% Moreover, as shown in these figures, the average number of phases are almost always one in fully connected topology. We expect with an acceptable chance that the voting will get completed after only one phase since we gave the algorithm sufficient number of rounds for convergence as mentioned before. Therefore, for fully connected networks, DVB algorithm might be performed with only one phase excluding the termination detection.

%, and all nodes are neighbors to each other in this topology

\subsubsection{Experiments on the number of nodes}

We performed another experiment by changing the number of nodes in the network between 20-100 nodes for binary voting with $\delta=2/3$. Figure \ref{figure_simulation_numberOfNodes} shows the average results of 1000 number of simulations for this experiment. The $D$ was set to its upper bound, i.e. $N$, in the simulations. As can be seen in this figure, the DVB1 algorithm is almost invariant with respect to the number of nodes in the network. Notice that this property also exists for ternary voting but for the sake of brevity, the binary case is just shown here. 

Overall, it was empirically shown that the DVB1 algorithm is fairly invariant to the following network attributes:
\begin{itemize}
\item The topology of network
\item The initial distribution of values in the network
\item The number of nodes
\end{itemize}

\begin{figure}[!t]
\centering
\includegraphics[width=3.25in]{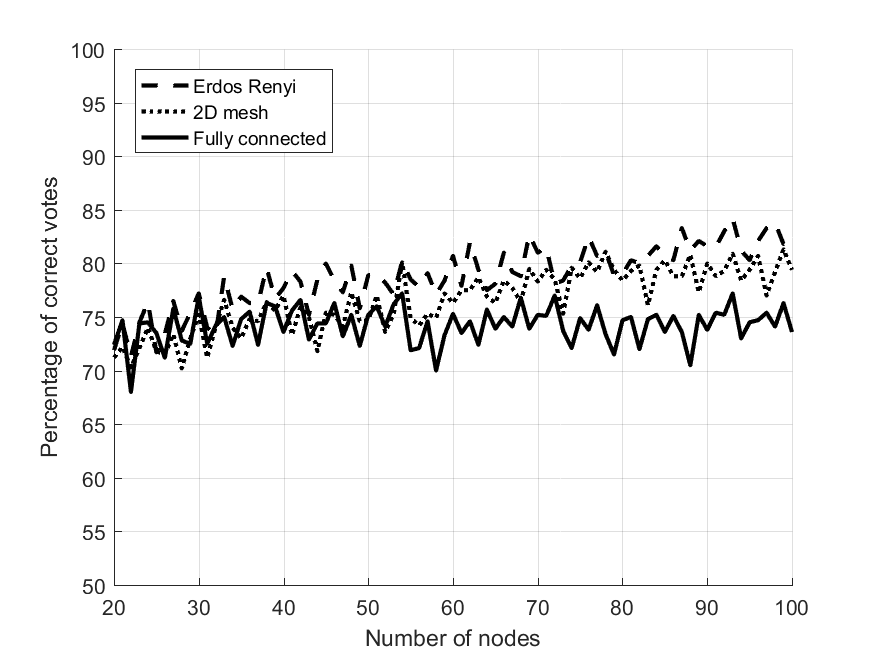}
\caption{Simulation of DVB1 algorithm on fully connected, 2D mesh, and Erd{\"o}s-R{\'e}nyi networks with different number of nodes for binary voting.}
\label{figure_simulation_numberOfNodes}
\end{figure}

In addition to the above experiments, we also depict the number of phases versus the network size for the three network topologies in Figure \ref{figure_simulation_PhasesVsNodes}. As can be seen, the number of phases is in the order of $\Uptheta(1)$, $\Uptheta( \sqrt{N})$, and $\Uptheta(\log(N))$ (where $c$ is a constant) for the fully connected, 2D mesh grid, and Erd{\"o}s-R{\'e}nyi networks, respectively. This validates the assumption of $\Uptheta(D)$ number of phases in the complexity analysis of DVB1. Moreover, the assumption of having only one phase of DVB1 algorithm in fully connected topology is also verified in this figure.

\begin{figure}[t]
	\centering
	\begin{subfigure}{0.49\textwidth}
		\includegraphics[width=2.9in]{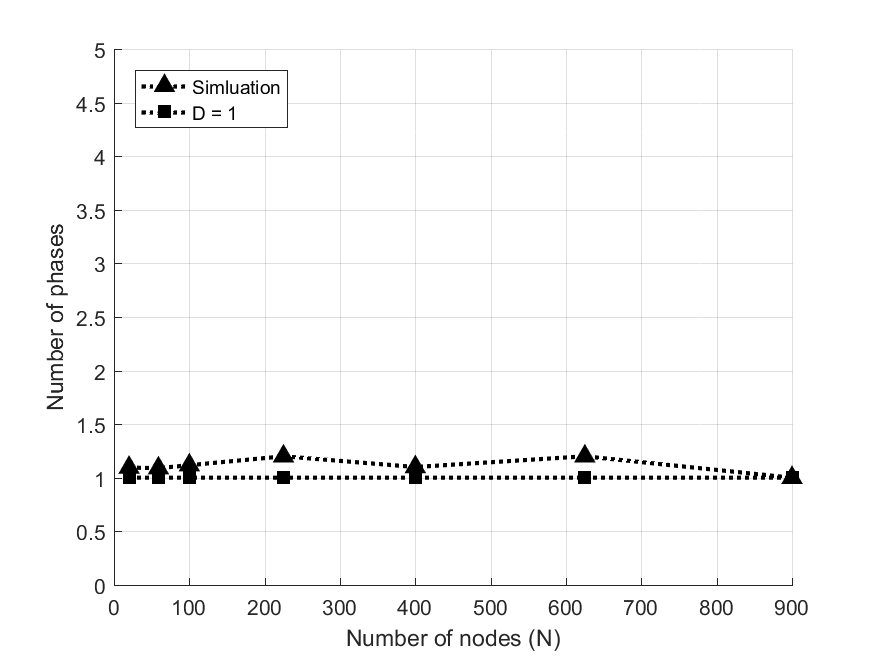}
		\caption{}
	\end{subfigure}
	\begin{subfigure}{0.49\textwidth}
		\includegraphics[width=2.9in]{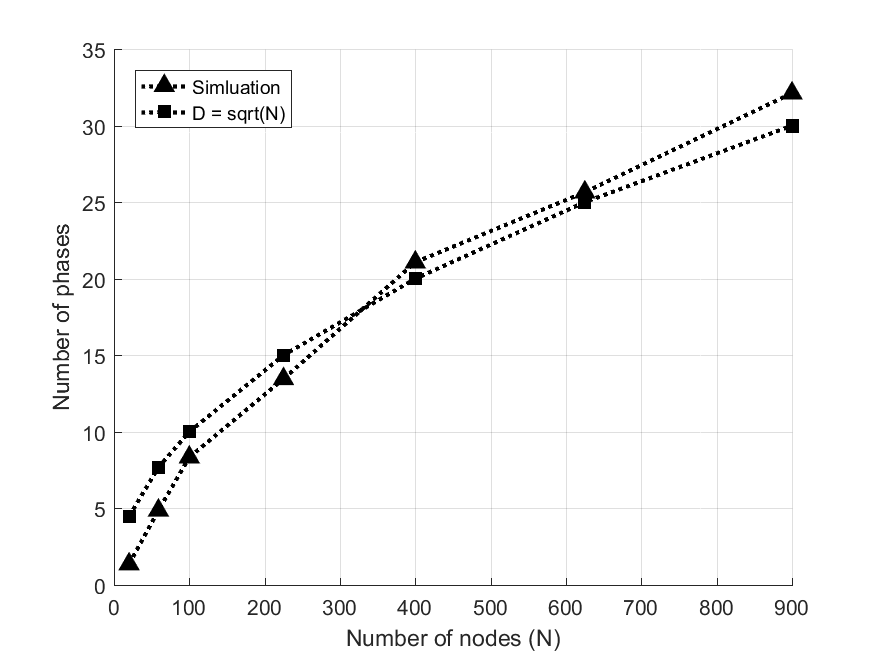}
		\caption{}
	\end{subfigure}
	\begin{subfigure}{0.49\textwidth}
		\includegraphics[width=2.9in]{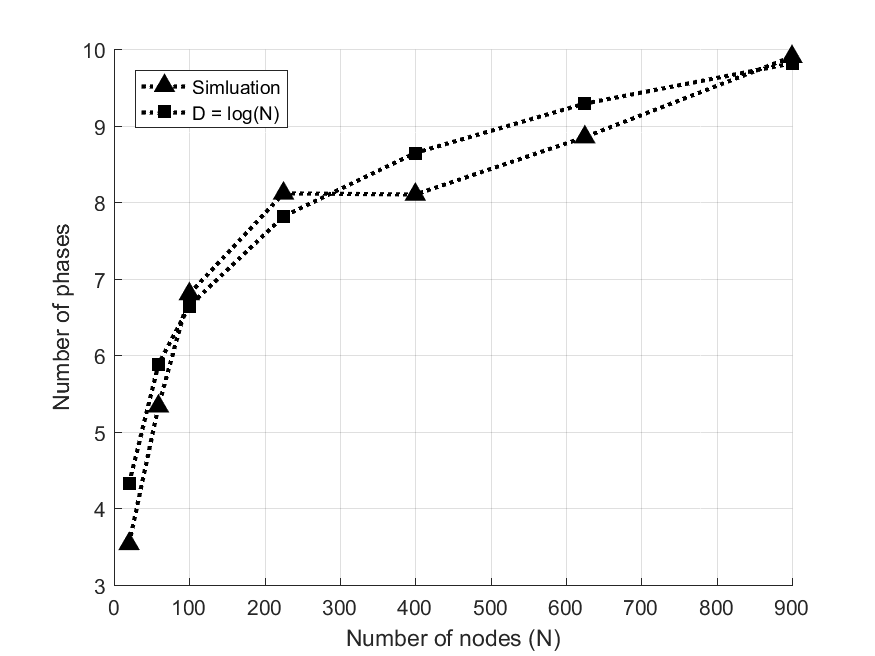}
		\caption{}
	\end{subfigure}
	\caption{The number of phases of DVB1 algorithm in (a) fully connected, (b) 2D mesh grid, and (c) Erd{\"o}s-R{\'e}nyi networks with different number of nodes for the task of binary voting. The number of phases are scaled by a constant for better illustration of the order of complexity.}
	\label{figure_simulation_PhasesVsNodes}
\end{figure}

\subsection{Simulations for DVB2}

\begin{figure}[!t]
\centering
\begin{subfigure}[b]{0.49\textwidth}
\centering
\includegraphics[width=3in]{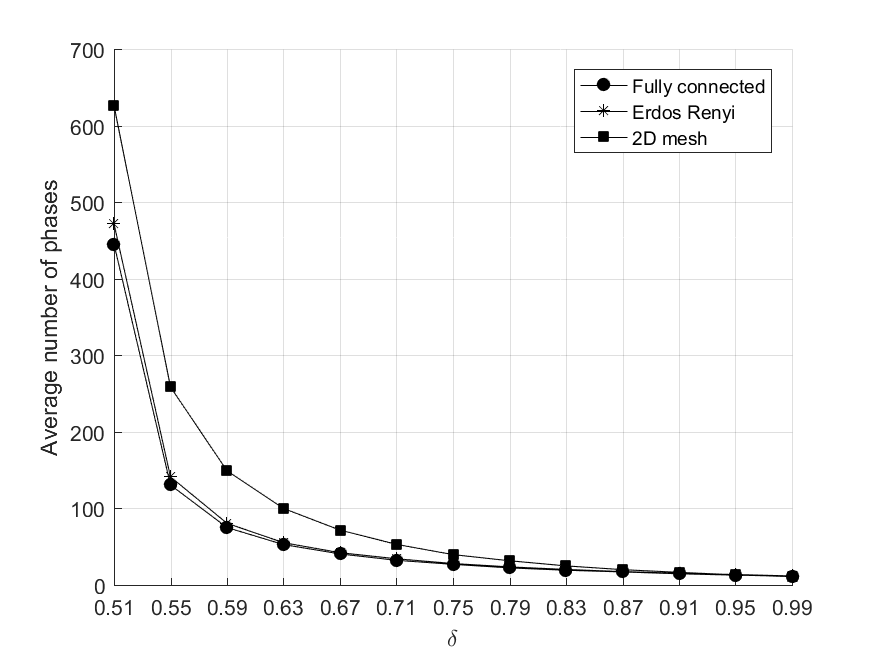} 
\caption{Binary voting}
\label{figure_simulation_twoValues_phases_DVB2}
\end{subfigure}
\begin{subfigure}[b]{0.49\textwidth}
\centering
\includegraphics[width=3in]{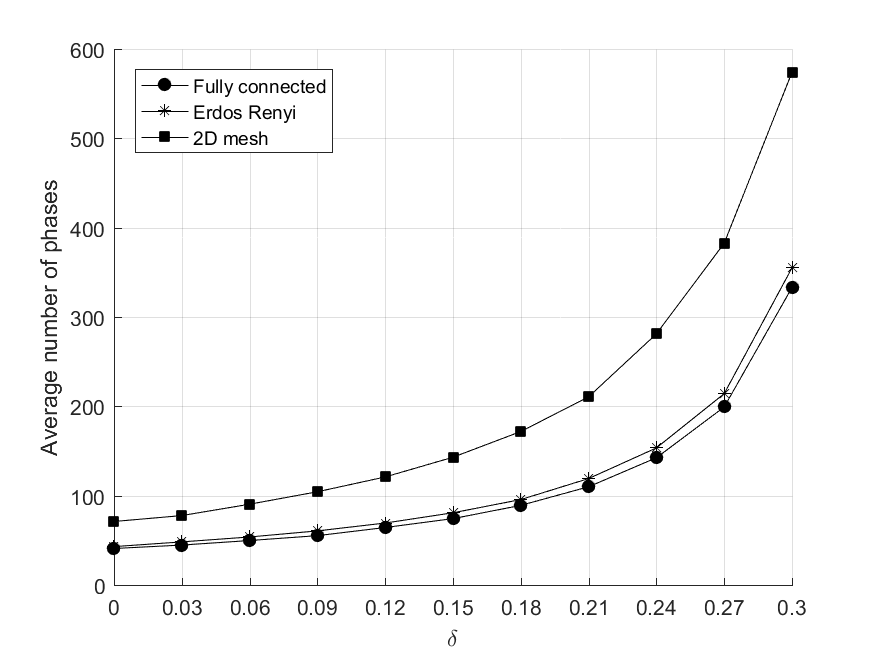} 
\caption{Ternary voting}
\label{figure_simulation_threeValues_phases_DVB2}
\end{subfigure}
\caption{Average number of required phases in simulations of DVB2 algorithm on fully connected, 2D mesh grid, and Erd{\"o}s-R{\'e}nyi networks with 100 nodes for binary and ternary voting.}
%\label{a_label}
\end{figure}

We simulated the DVB2 algorithm on fully connected, 2D mesh grid, and Erd{\"o}s-R{\'e}nyi with probability $\frac{2}{N} \log_2(N)$. The number of nodes and the number of simulations for any value of $\delta$ were 100 and 1000, respectively. The experiments and the sweeps over $\delta$ are similar to the setting in Section \ref{section_simulation_on_populationMajority}. According to \cite{salehkaleybar2015distributed}, the DVB2 is invariant to the topology, the initial distribution of values, and the network size, and it returns the correct output with probability one.
Figures \ref{figure_simulation_twoValues_phases_DVB2} and \ref{figure_simulation_threeValues_phases_DVB2} show the average number of phases until reaching consensus in the simulations. As expected, the average number of phases increases when $\delta$ is close to $0.5$ (for binary voting) and $1/3$ (for ternary voting). This is because in these cases, the number of value levels are very close to each other requiring more time for reaching consensus. Moreover, comparing Figures \ref{figure_simulation_twoValues_phases_DVB1} and \ref{figure_simulation_threeValues_phases_DVB1} with \ref{figure_simulation_twoValues_phases_DVB2} and \ref{figure_simulation_threeValues_phases_DVB2} shows that the DVB2 algorithm requires more phases than DVB1 to terminate which makes sense because of its time complexity. On the other hand, the DVB2 algorithm returns the correct output regardless of initial distribution of values or network topology.

\section{Conclusion \& Future Work}\label{section_conclusion}
In this paper, we proposed two simple-structure algorithms for distributed voting in beep model.  The first algorithm is based on forming spots of nodes having the same values, and corroding the small (non-majority) spots against the large (majority) spots to reach the correct result.  Our experiments showed that this algorithm is fairly invariant  to the network topology, initial distribution of values, and the size of network.
The second algorithm is based on the DMVR algorithm \cite{salehkaleybar2015distributed} and is also invariant to topology, initial values, and the network size.
As a possible future work, one can study the problem of distributed voting using beep signals in the asynchronous model. 
%{\color{blue}We also defer the probability of success and correctness of the DVB1 algorithm for a general topology to a future work.}

%Moreover, the proposed algorithm does not guarantee $100\%$ correct voting in all situations. As an another future work, we are interested in answering the question whether it can be possible to reach $100\%$ correct voting in beep model without utilizing distributed primitives such as leader election or multi-broadcasting.

% if have a single appendix:
%\appendix[Proof of the Zonklar Equations]
% or
%\appendix  % for no appendix heading
% do not use \section anymore after \appendix, only \section*
% is possibly needed

% use appendices with more than one appendix
% then use \section to start each appendix
% you must declare a \section before using any
% \subsection or using \label (\appendices by itself
% starts a section numbered zero.)
%

\appendices

% \appendix

\section{Proof of Proposition \ref{proppsition_lower_bound}}\label{section_appendix_A}

In fully connected networks, the DVB1 algorithm returns correct result in both of the following events:  
\begin{itemize}
\item At least two nodes of level $l_m$ remain alive while there is at most one node alive in one of the other values $k \neq m$, or
\item One alive node remains in level $l_m$ and all nodes in other values $l_k \neq l_m$ are dead,
\end{itemize}
where $\#l_m(0) > \#l_k(0)$, $\forall k \neq m$.

In the first event, the value $l_m$ wins definitely because the only alive node in another value $l_k$ ($k \neq m$) hears beep from the alive nodes in value $l_m$ (and not hears itself) in the current round and it changes its value to $l_m$. However, the case in which there exists one alive node in each level should be deferred waiting to next rounds until all except one of them die (so that level wins) or all die. In the second event, level $l_m$ wins because in the current round, all the nodes will hear beep from the alive node with value $l_m$ and thus change their values to $l_m$.

The probability of having at least two nodes alive in level $l_m$ at round $r$ (i.e., not having all dead or all except one dead) is $1 - \Big[(1-p^r)^{\#l_m(0)} +  \binom{\#l_m(0)}{1} p^r (1-p^r)^{\#l_m(0)-1}\Big]$.
The probability of having all nodes dead in other levels $l_k$ ($k \neq m$) at round $r$ is $\prod_{k=1, k \neq m}^{K} (1-p^r)^{\#l_k(0)}$ as the probabilities are independent.
Moreover, at round $r$, The probability of having one alive node in value level $l_k$ ($k \neq m$)  while all nodes of other value levels are dead is
$\binom{\#l_k(0)}{1} p^r (1-p^r)^{\#l_k(0)-1} \prod_{k'=1, \, k' \neq m, k}^{K} (1-p^r)^{\#l_{k'}(0)}$. This expression should be summed up over all value levels to cover all possibilities. 
Finally, the probability of the first event is 
$\bigg[1 - \Big[(1-p^r)^{\#l_m(0)} +  \binom{\#l_m(0)}{1} p^r (1-p^r)^{\#l_m(0)-1}\Big]\bigg] \times  \bigg[\prod_{k=1, k \neq m}^{K} (1-p^r)^{\#l_k(0)} + \sum_{k=1}^K \Big[ \binom{\#l_k(0)}{1} p^r (1-p^r)^{\#l_k(0)-1} \prod_{k'=1, \, k' \neq m, k}^{K} (1-p^r)^{\#l_{k'}(0)} \Big] \bigg]$.

The probability of the second event is  $\binom{\#l_m(0)}{1} p^r (1-p^r)^{\#l_m(0)-1}  \prod_{k=1, k \neq m}^{K} (1-p^r)^{\#l_k(0)}$.
The overall probability of success at round $r$ is at least $\bigg[1 - \Big[(1-p^r)^{\#l_m(0)} +  \binom{\#l_m(0)}{1} p^r (1-p^r)^{\#l_m(0)-1}\Big]\bigg] \times  \bigg[\prod_{k=1, k \neq m}^{K} (1-p^r)^{\#l_k(0)} + \sum_{k=1}^K \Big[ \binom{\#l_k(0)}{1} p^r (1-p^r)^{\#l_k(0)-1} \prod_{k'=1, \, k' \neq m, k}^{K} (1-p^r)^{\#l_{k'}(0)} \Big] \bigg] + \binom{\#l_m(0)}{1} p^r (1-p^r)^{\#l_m(0)-1}  \prod_{k=1, k \neq m}^{K} (1-p^r)^{\#l_k(0)}$.
However, note that this lower bound is correct for any $r \geq 0$. So, we can get a better lower bound on the probability of success by maximizing over $r$ and obtain the bound stated in the proposition.

\section{Proof of Proposition \ref{proppsition_lower_bound_2}}\label{section_appendix_B}

If in a time slot, at least one node remains alive in level $l_m$ (i.e., not having all of them dead) and all nodes having other level $l_k \neq l_m$ are dead,  voting is done correctly.
By maximizing over $r$, the probability of this event at a time slot $r$ is obtained as
\begin{align*}
\max_r \Big(1-(1-p^r)^{\#l_m(0)}\Big) \times \prod_{l_k\neq l_m} (1-p^r)^{\#l_k(0)}.
\end{align*}
The above expression is a lower bound on the probability of success. 
Consider $r=\lceil \log_2(\sqrt{\#l_m(0) \#l_{m'}(0)})\rceil$. Assuming $p = 1/2$ and substituting the value of $r=\lceil \log_2(\sqrt{\#l_m(0) \#l_{m'}(0)})\rceil$ in the term $(1 - p^r)$, we have $(1 - p^r) = 1 - 1/\sqrt{\#l_m(0) \#l_{m'}(0)}$. 
We know that $(1 - p^r)^{\#l_m(0)} = \exp(\#l_m(0) \ln(1 - p^r)) = \exp(\#l_m(0) \ln(1 - 1/\sqrt{\#l_m(0) \#l_{m'}(0)}))$. 
Because of $\ln(1 + x) \leq x$ for $x > -1$, we have $(1 - p^r)^{\#l_m(0)} \leq \exp(-\#l_m(0) / \sqrt{\#l_m(0) \#l_{m'}(0)}) = \exp(-\sqrt{\#l_m(0) / \#l_{m'}(0)})$. 
So, the first part of expression, which was $\Big(1-(1-p^r)^{\#l_m(0)}\Big)$, is greater than or equal to $\Big(1 - \exp(-\sqrt{\#l_m(0) / \#l_{m'}(0)})\Big)$. 

Now, we consider the second part of expression which is $\prod_{l_k\neq l_m} (1-p^r)^{\#l_k(0)}$. 
Similar to the previous approach, we have $(1 - p^r)^{\#l_k(0)} = \exp(\#l_k(0) \ln(1 - 1/\sqrt{\#l_m(0) \#l_{m'}(0)}))$. 
As we have $x / (1 + x) \leq \ln(1 + x)$ for $x > -1$, we can say $(1 - p^r)^{\#l_k(0)} \geq \exp(-\#l_k(0) / (\sqrt{\#l_m(0) \#l_{m'}(0)} - 1) )$.
We rewrite the second expression as $\max_r \prod_{l_k\neq l_m} (1-p^r)^{\#l_k(0)} \geq \exp(- \sum_{l_k\neq l_m} \#l_k(0) / (\sqrt{\#l_m(0) \#l_{m'}(0)} - 1) ) \geq \exp(- (K-1) \#l_{m'}(0) / (\sqrt{\#l_m(0) \#l_{m'}(0)} - 1) )$.

Finally, substituting the two derived expressions in the first and second expressions of $\Big(1-(1-p^r)^{\#l_m(0)}\Big) \times \prod_{l_k\neq l_m} (1-p^r)^{\#l_k(0)}$ results in the proposed lower bound.

\section{Proof of Corollary \ref{corollary_DVB1_valueProportion}}\label{section_appendix_C}

Assuming that $\#l_m(0)\#l_{m'}(0)\geq K^2$, the probability of success, proposed in Proposition \ref{proppsition_lower_bound_2}, becomes greater than $(1 - \exp(-y)) \times \exp(-K/y)$ where $y = \sqrt{\#l_m(0) / \#l_{m'}(0)}$. 
Thus, we need to have $(1 - \exp(-y)) \times \exp(-K/y) \geq 1-\varepsilon$. Taking logarithm from both sides of this inequality, we have $\ln(1 - \exp(-y)) - K/y \geq \ln (1-\varepsilon)$. 
As we have $y > 0$ and $0 < \exp(-y) < 1$, we can say $\ln(1 - \exp(-y)) \leq -\exp(-y) \leq y$. Therefore, $y - K/y \geq \ln(1 - \exp(-y)) - K/y \geq \ln (1-\varepsilon)$ which results in $y \geq 0.5\Big(\ln (1-\varepsilon) + \sqrt{(\ln (1-\varepsilon))^2 + 4K}\Big)$.
% As $y > 0$, we can say $\ln(1 - \exp(-y)) < 0$; therefore, $-K/y \geq \ln(1 - \exp(-y)) - K/y \geq \ln \delta$. This results in $y^2 \geq \big(K / (1 - \ln \delta)\big)^2$. or $y^2 \geq \big(K / (\ln \delta)\big)^2$.

%%%%%\section{Proof of the First Zonklar Equation}
%%%%%Appendix one text goes here.

% you can choose not to have a title for an appendix
% if you want by leaving the argument blank
%%%%%\section{}
%%%%%Appendix two text goes here.

% use section* for acknowledgment
%%%%%\section*{Acknowledgment}

%%%%%%%%%%%%%%%The authors would like to thank...

% Can use something like this to put references on a page
% by themselves when using endfloat and the captionsoff option.
\ifCLASSOPTIONcaptionsoff
  \newpage
\fi

% trigger a \newpage just before the given reference
% number - used to balance the columns on the last page
% adjust value as needed - may need to be readjusted if
% the document is modified later
%\IEEEtriggeratref{8}
% The "triggered" command can be changed if desired:
%\IEEEtriggercmd{\enlargethispage{-5in}}

% references section

% can use a bibliography generated by BibTeX as a .bbl file
% BibTeX documentation can be easily obtained at:
% http://mirror.ctan.org/biblio/bibtex/contrib/doc/
% The IEEEtran BibTeX style support page is at:
% http://www.michaelshell.org/tex/ieeetran/bibtex/
%\bibliographystyle{IEEEtran}
% argument is your BibTeX string definitions and bibliography database(s)
%\bibliography{IEEEabrv,../bib/paper}
%
% <OR> manually copy in the resultant .bbl file
% set second argument of \begin to the number of references
% (used to reserve space for the reference number labels box)

%\bibliographystyle{IEEEbib}
%\bibliographystyle{plain}
\bibliographystyle{IEEEtran}
\bibliography{References}

% biography section
% 
% If you have an EPS/PDF photo (graphicx package needed) extra braces are
% needed around the contents of the optional argument to biography to prevent
% the LaTeX parser from getting confused when it sees the complicated
% \includegraphics command within an optional argument. (You could create
% your own custom macro containing the \includegraphics command to make things
% simpler here.)
%\begin{IEEEbiography}[{\includegraphics[width=1in,height=1.25in,clip,keepaspectratio]{mshell}}]{Michael Shell}
% or if you just want to reserve a space for a photo:

\hfill \break

\begin{IEEEbiography}[{\includegraphics[width=1in,height=1.25in,clip,keepaspectratio]{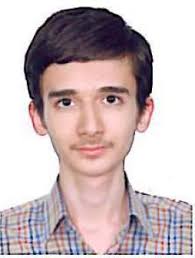}}]%
{Benyamin Ghojogh}
received the first and second B.Sc. degrees in electrical engineering (electronics and telecommunications) from the Amirkabir University of Technology, Tehran, Iran, in 2015 and 2017, respectively, and the M.Sc. degree in electrical engineering from the Sharif University of Technology, Tehran, Iran, in 2017. He is currently pursuing the Ph.D. degree in electrical and computer engineering with the University of Waterloo, Waterloo, ON, Canada. His research interests include distributed algorithms, machine learning, and manifold learning.
\end{IEEEbiography}

\begin{IEEEbiography}[{\includegraphics[width=1in,height=1.25in,clip,keepaspectratio]{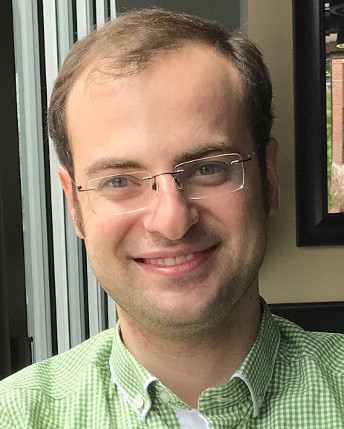}}]%
{Saber Salehkaleybar}
received the B.Sc., M.Sc., and Ph.D. degrees in Electrical Engineering from Sharif University of Technology, Tehran, Iran, in 2009, 2011, and 2015, respectively. He is currently an Assistant Professor of Electrical Engineering at Sharif University of Technology, Tehran, Iran. Prior to joining Sharif University of Technology, he spent a year as a postdoctoral researcher working on causal inference in Coordinated Science Lab. (CSL) at University of Illinois at Urbana-Champaign (UIUC). His research interests include distributed systems, machine learning, and causal inference.
\end{IEEEbiography}

% that's all folks
\end{document}